\documentclass{article}

\usepackage[letterpaper,top=2cm,bottom=2cm,left=3cm,right=3cm,marginparwidth=1.75cm]{geometry}

\usepackage{amsmath,amssymb,amsthm}
\usepackage{booktabs}
\usepackage{threeparttable}
\usepackage{makecell}
\usepackage{pdflscape}
\usepackage{graphicx}
\usepackage{amsfonts}
\usepackage{multirow}
\usepackage{float}
\usepackage{cite}
\usepackage{listings}
\usepackage[table,svgnames]{xcolor}
\usepackage{changepage}
\usepackage{mathtools}

\usepackage[colorlinks=true, allcolors=blue]{hyperref}

\newcommand{\ms}[2]{\makecell[c]{#1\\[-1pt]{\scriptsize(#2)}}}
\newcommand{\bestms}[2]{\makecell[c]{\textbf{#1}\\[-1pt]{\scriptsize(#2)}}}

\newtheorem{proposition}{Proposition}[section]

\definecolor{codegray}{gray}{0.45}
\definecolor{codebackground}{gray}{0.96}
\definecolor{codekeyword}{RGB}{0,60,120}
\definecolor{codecomment}{RGB}{70,110,70}
\definecolor{codestring}{RGB}{140,40,40}

\lstdefinestyle{Rstyle}{
  language=R,
  alsoletter={.},
  basicstyle=\ttfamily\small,
  keywordstyle=\bfseries\color{codekeyword},
  commentstyle=\itshape\color{codecomment},
  stringstyle=\color{codestring},
  backgroundcolor=\color{codebackground},
  frame=single,
  framerule=0.3pt,
  rulecolor=\color{codegray},
  numbers=left,
  numberstyle=\tiny\color{codegray},
  numbersep=6pt,
  stepnumber=1,
  breaklines=true,
  breakatwhitespace=false,
  breakindent=1em,
  columns=fullflexible,
  keepspaces=true,
  tabsize=2,
  showstringspaces=false,
  captionpos=b,
  aboveskip=0.8\baselineskip,
  belowskip=0.8\baselineskip
}

\title{Priority-preserving augmentation of goodness-of-fit tests by conditional calibration}
\author{Roman Guchenko}

\providecommand{\keywords}[1]
{
  \small	
  \textbf{\textit{Keywords ---}} #1
}

\begin{document}

\maketitle

\begin{abstract}
An omnibus goodness-of-fit statistic can fail to exploit simple
diagnostic evidence efficiently.  For example, under a standard normal
null, an exceptionally large observation is direct evidence of a
scale or tail departure even when the primary omnibus statistic does
not cross its critical value.  This motivates a simple augmentation
principle: retain an established primary test, but reserve a small
part of its null rejection budget for secondary statistics that encode
natural features such as variation, asymmetry, or tail behavior.

We implement this principle by calibrating each secondary acceptance
region under the null conditional on acceptance at all preceding
stages.  Once calibrated, the resulting test is a fixed rectangular
acceptance rule; the ordering is a mechanism for choosing its
boundaries and assigning ordered first-rejection contributions, not a
sequential-sampling scheme.  An unconditional stage-budget
parameterization makes the central trade-off explicit: additional
sensitivity is purchased by removing a prespecified, usually small,
amount of rejection probability from the primary stage.  We establish
strong consistency of the quantile-based Monte Carlo calibration and
give an observation-specific pooled-rank version with exact
randomized finite-\(m\) null size.

In an experiment under a standard normal null with \(n=10\), we
augment the Kolmogorov--Smirnov statistic with sample variance and
sample skewness.  Assigning only \(0.75\%\) of the total Type~I error
budget to the two secondary statistics changes power against
\(\mathcal N(0.5,1)\) from \(0.2742\) to \(0.2733\), while increasing
power against \(\mathcal N(0,0.2^2)\) from \(0.4693\) to \(0.9827\).
This focused experiment is a proof of principle rather than an
exhaustive comparison of normality tests: a small unconditional
allocation to prespecified diagnostics can greatly broaden power while
preserving almost all of the primary test's power.
\end{abstract}

\keywords{goodness-of-fit testing, test augmentation, conditional
calibration, secondary statistics, Type~I error allocation, power
decomposition, exact Monte Carlo test}

\section{Introduction}

Consider testing a standard normal null.  If one observation in a
small sample is close to \(10\), there is immediate and easily
interpretable evidence against the null.  Nevertheless, a primary
omnibus statistic need not use that evidence efficiently.  For
example, the contribution of a single extreme upper order statistic to
the Kolmogorov--Smirnov distance can be only about \(1/n\).  A simple
scale or tail diagnostic may identify the same departure much more
decisively.

This example reflects a general problem.  The alternative
\(P\neq P_0\) contains changes in location, scale, tail weight,
asymmetry, modality, and many other features.  A statistic designed to
summarize global discrepancy cannot be uniformly sensitive to all of
them.  Numerical comparisons accordingly show substantial changes in
the power ranking of goodness-of-fit tests across alternative
families \cite{rolke2024univariate,rolke2026multivariate}.

The starting point of this paper is therefore deliberately simple.
Rather than replace a useful omnibus test, we augment it with a small
number of natural diagnostics.  The original statistic remains the
\emph{primary} statistic.  Quantities such as sample variance,
skewness, kurtosis, or other problem-specific summaries become
\emph{secondary} statistics.  Only a small part of the total
Type~I error budget is removed from the primary statistic and assigned
to these additions.  Their calibrated acceptance intervals formalize
the natural bounds suggested by the null, such as ``variance should
not be too far from one'' or ``skewness should not be too far from
zero.''  The intended outcome is not uniform improvement: it is a
controlled exchange of a very small amount of primary power for much
greater sensitivity to departures that the primary statistic handles
poorly.

Naively rejecting whenever any added statistic is extreme would
inflate the null rejection probability.  It would also obscure how
much each added statistic contributes after the primary test has been
given priority.  We address both issues through an ordered chain
\[
T_0\longrightarrow T_1\longrightarrow\cdots\longrightarrow T_L.
\]
At stage \(r\), the acceptance region of \(T_r\) is calibrated under
the null conditional on the sample having passed stages
\(0,\ldots,r-1\).  The chain is thus the calibration and accounting
mechanism for the augmentation principle.  It preserves the
prespecified overall size, and its disjoint first-rejection regions
measure what each secondary statistic adds after all earlier
statistics have accepted the sample.  After calibration, however, the
observed sample is tested against one fixed intersection of
coordinatewise acceptance conditions.  Thus ``sequential'' refers to
boundary construction and attribution, not to sequential data
collection or to a requirement that the statistics be evaluated one
at a time.

It is particularly useful to describe the allocation through
unconditional first-rejection budgets \(b_0,\ldots,b_L\), satisfying
\[
\sum_{r=0}^L b_r=\alpha.
\]
The primary budget \(b_0\) is close to \(\alpha\), while the secondary
budgets are normally small.  The corresponding conditional levels are
chosen so that stage \(r\) contributes exactly \(b_r\) to the
population null rejection probability.  This parameterization makes
the scientific question transparent: how much additional power is
obtained from each secondary diagnostic in exchange for the amount
\(\alpha-b_0\) removed from the primary test?

The proposed construction is related to methods that combine evidence
from several tests.  Classical procedures combine marginal
\(p\)-values through functions such as Fisher's product or Tippett's
minimum \cite{fisher1932methods,tippett1931methods}; more recent work
studies \(p\)-value merging under general dependence
\cite{vovkwang2020averaging,vovkwangwang2020admissible,
liuxie2020cauchy}.  Resampling methods instead estimate the joint null
distribution of the component statistics or \(p\)-values
\cite{westfallyoung1993resampling,romanowolf2005stepdown}.  In the
goodness-of-fit setting, Rolke \cite{rolke2020simultaneous} calibrates
the minimum \(p\)-value from a collection of component tests.  Such
simultaneous rules are natural competitors when all components are
treated symmetrically.  Priority and error-budget allocation are also
central in gatekeeping procedures for primary and secondary endpoints
\cite{tamhane2018gatekeeping}, although that literature tests several
hypotheses rather than several statistics for one hypothesis.  Our
objective is to preserve a designated primary test as much as possible
and measure the incremental effect of explicitly chosen secondary
diagnostics.  We use a weighted minimum-\(p\) rule, with weights
matched to the same allocation proportions, as a competing
boundary-selection rule.  The matching is proportional: under
dependence, its jointly calibrated coordinate thresholds do not in
general reproduce the chain's first-rejection budgets.

Other approaches adapt over structured families of components or
tuning parameters.  Neyman's smooth-test framework represents
departures through orthogonal components
\cite{neyman1937smooth,neyman1967selection}, and data-driven
extensions select the number of components or resolution
\cite{ledwina1994datadriven,kallenberg1995consistency,
fan1996wavelet,spokoiny1996adaptive}.  Fromont and Laurent
\cite{fromont2006adaptive} develop an adaptive multiple-testing
procedure based on estimators of an \(L^2\)-distance.  Aggregated
kernel procedures combine tests over kernels or bandwidths
\cite{schrab2022ksdagg,schrab2021mmdagg}.  The secondary statistics
considered here need not form such a parametrized family: they may
encode qualitatively different and directly interpretable features of
the sample.

The ordered construction is also related in spirit to sequentially
rejective multiple testing \cite{goemansolari2010sequential}, but the
setting is different.  We test one null hypothesis using several
statistics of the same sample; we do not test a family of distinct
null hypotheses.  The conditioning is on acceptance by earlier
statistics, not on rejection of earlier hypotheses.

This paper continues our earlier work on simultaneous acceptance
regions in the joint space of several sample statistics
\cite{guchenko2026joint,AldorNoimanBrownBujaRolkeStine2013,
sailynoja2022graphical,king2020}.  The chain replaces one multivariate
calibration problem by successive conditional univariate
calibrations.  When one statistic is added per stage, no multivariate
density estimate is required.

The main contribution is the augmentation principle together with a
priority-preserving calibration scheme that makes its cost and benefit
measurable.  The final test remains a rectangular intersection rule;
the contribution lies in how its boundaries are chosen and how their
incremental effects are accounted for.  We
derive the unconditional and conditional Type~I error
parameterizations and an ordered decomposition of power.  The
supporting theoretical results establish strong consistency of the
reusable Monte Carlo calibration, give a sufficient condition for
population order invariance, and construct an observation-specific
pooled-rank version with exact finite-\(m\) null size.  The weighted
minimum-\(p\)/hyperrectangle equivalence is recorded to clarify the
baseline rather than as a separate competing procedure.

All statistics, tail rules, ordering choices, and budgets in these
validity statements are fixed independently of the observed sample.
If diagnostics or allocations are selected after inspecting the data,
that selection must itself be included in the null calibration.

The experiment augments the Kolmogorov--Smirnov statistic with sample
variance and sample skewness under a standard normal null.  Its
purpose is to evaluate the central claim: a very small unconditional
allocation to well-chosen secondary statistics can greatly broaden
power while preserving almost all of the primary test's power.  A
sensitivity analysis then shows how this exchange changes as the
total secondary budget and its division between variance and skewness
are varied.

\section{Augmenting a primary goodness-of-fit test}

The construction begins with a primary test that one wishes to retain.
Secondary statistics are chosen because they represent simple,
scientifically meaningful departures that may be poorly detected by
the primary statistic.  Their acceptance regions turn qualitative
facts implied by the null into calibrated bounds.  The role of the
chain is to add those bounds while controlling the total null
rejection probability and to quantify the resulting gain and cost.

\subsection{Formal setup}

Let \(X_1,\ldots,X_n\stackrel{\mathrm{iid}}{\sim}P\), where \(P\)
is an unknown absolutely continuous distribution, and write
\(X=(X_1,\ldots,X_n)\in\mathcal X=\mathbb R^n\).  The observed sample
is a realization \(x=(x_1,\ldots,x_n)\).  Throughout the paper,
uppercase letters denote random elements and lowercase letters their
realizations.

We test \(H_0:P=P_0\) against \(H_1:P\neq P_0\), where \(P_0\) is a
fixed null distribution.  Write \(Q_P=P^{\otimes n}\) for the law of
the sample vector and \(Q_0=P_0^{\otimes n}\) under the null.

Now let 
\begin{equation}
\label{eq:stat:seq}
T_0, T_1, \dots, T_L, \qquad T_r : \mathcal{X} \rightarrow \mathbb{R},
\end{equation}
be an ordered sequence of statistics used to assess whether the
observed sample is compatible with \(Q_0\).
We call \(T_0\) the primary statistic and
\(T_1,\ldots,T_L\) the secondary statistics. To stress the importance of order, we write the sequence of statistics~\eqref{eq:stat:seq} as an arrow-connected chain
\begin{equation}
T_0 \to T_1 \to \dots \to T_L,
\end{equation}
and call the test based on this sequence ``the chain test''.

\subsection{Sequential conditional calibration}
The augmented test is defined by the following recursive procedure.

Denote
\begin{equation}
\mathcal{A}_{-1} = \mathcal{X},
\end{equation}
the initial survivor region containing every possible sample.
For stages $r = 0, \dots, L$, let 
\begin{equation}
B_r \subseteq \mathbb{R}
\end{equation}
be the acceptance set (filter bounds) for statistic $T_r$, 
calibrated under the null distribution conditional on
$X\in\mathcal A_{r-1}$.

The corresponding survivor set is defined recursively by
\begin{equation}
\mathcal{A}_{r} = \mathcal{A}_{r-1} \cap \left\{ x \in \mathcal{X} : T_r(x) \in B_r \right\}.
\end{equation}
If \(r<L\), this survivor region defines the conditional null
distribution used to calibrate \(B_{r+1}\).

The augmented test accepts precisely when \(x\in\mathcal A_L\), or
equivalently when
\[
T_r(x)\in B_r
\qquad\text{for every }r=0,\ldots,L.
\]
It rejects otherwise.  The ordering gives priority to the primary
statistic and determines which statistic receives first-rejection
credit; the final rejection region is the complement of the
intersection of all calibrated acceptance conditions.

Consequently, once \(B_0,\ldots,B_L\) have been calibrated, the chain
is applied as the static rectangular rule
\[
\mathcal A_L
=
\bigcap_{r=0}^L\{x:T_r(x)\in B_r\}.
\]
The recursion is therefore a boundary-selection and accounting
device.  It does not enlarge the class of final rejection regions
beyond unions of violations of the chosen coordinatewise bounds.

\subsection{Stage-specific rejection regions}

Let us define the incremental rejection region at stage $r$ as 
\begin{equation}
\mathcal{R}_r = \mathcal{A}_{r-1} \setminus \mathcal{A}_r.
\end{equation}
Equivalently, it can be defined as
\begin{equation}
\mathcal{R}_r = \mathcal{A}_{r-1} \cap \{ x \in \mathcal{X} : T_r(x) \notin B_r \},
\end{equation}
i.e. it consists of all possible samples that pass all the filters up to stage $r-1$ and do not satisfy the stage-$r$ acceptance condition.

Since the regions $\mathcal{R}_r$ are pairwise disjoint by construction, the final rejection region is
\begin{equation}
\label{eq:RL}
\mathcal{R}^{(L)} = \bigcup_{r = 0}^L \mathcal{R}_r = \mathcal{A}^{c}_L.
\end{equation}
Here  
\begin{itemize}
\item $\mathcal{R}_0$ consists of samples rejected by bounds $B_0$ of statistics $T_0$,
\item $\mathcal{R}_1$ consists of samples that pass $B_0$ bounds, but are rejected by $B_1$ bounds of statistic $T_1$,
\item $\mathcal{R}_2$ consists of samples that pass $B_0$ and $B_1$, but are rejected by $B_2$,
\item and so on.
\end{itemize}

\subsection{Conditional levels and unconditional stage budgets}

For \(r=0,\ldots,L\), provided that
\(Q_0(\mathcal A_{r-1})>0\), define
\begin{align}
\label{eq:bounds:eq1}
\gamma_r
=
Q_0\!\left(
    X\in\mathcal R_r
    \mid
    X\in\mathcal A_{r-1}
\right)
=
Q_0\!\left(
    T_r(X)\notin B_r
    \mid
    X\in\mathcal A_{r-1}
\right)
=
\frac{
    Q_0\!\left(
        X\in\mathcal A_{r-1},
        \;
        T_r(X)\notin B_r
    \right)
}{
    Q_0\!\left(
        X\in\mathcal A_{r-1}
    \right)
},
\end{align}
the conditional probability of rejection at stage \(r\), given
acceptance at all preceding stages.  For \(r\geq1\), \(\gamma_r\) is
the conditional level assigned to a secondary statistic.

For the primary stage,
\begin{equation}
\label{eq:bounds:eq2}
\gamma_0 = Q_0 (\mathcal{R}_0),
\end{equation}
because \(\mathcal A_{-1}=\mathcal X\).

The stage-\(r\) acceptance region \(B_r\) can be constructed from
conditional quantiles or another prespecified acceptance-region rule
under the corresponding conditional null distribution.

\medskip

For the augmentation problem, the unconditional probability assigned
to each first-rejection region is more directly interpretable.  Define
\begin{equation}
\label{eq:unconditional-stage-budget}
b_r
:=
Q_0(\mathcal R_r)
=
\left\{\prod_{s=0}^{r-1}(1-\gamma_s)\right\}\gamma_r,
\qquad r=0,\ldots,L,
\end{equation}
where the empty product equals one.  We call \(b_r\) the
unconditional stage-\(r\) budget.  The primary test receives \(b_0\);
the total amount removed from it and assigned to the secondary
statistics is
\[
\sum_{r=1}^L b_r=\alpha-b_0.
\]

Because the first-rejection regions are disjoint,
\begin{align}
Q_0(\mathcal R^{(L)})
&=
\sum_{r=0}^L b_r \nonumber\\
&=
\gamma_0+
\sum_{r=1}^L
\left\{\prod_{s=0}^{r-1}(1-\gamma_s)\right\}\gamma_r.
\label{eq:type1-budget-decomposition}
\end{align}
Fixing the overall level at
\(\alpha=Q_0(\mathcal R^{(L)})\) therefore requires
\[
\sum_{r=0}^L b_r=\alpha.
\]
Equivalently,
\begin{equation}
\label{eq:alpha:decomposition:alternative}
\alpha = {Q}_0(\mathcal{R}^{(L)}) = 1 - \prod_{r = 0}^L (1 - \gamma_r),
\end{equation}
as~\eqref{eq:RL} and
\begin{equation}
{Q}_0(\mathcal{A}_r) = \prod_{s=0}^r (1-\gamma_s).
\end{equation}

Conversely, any nonnegative budgets \(b_0,\ldots,b_L\) with
\(\sum_{r=0}^L b_r=\alpha<1\) determine the conditional levels
\begin{equation}
\label{eq:budget-to-gamma}
\gamma_r
=
\frac{b_r}
{1-\sum_{s=0}^{r-1}b_s},
\qquad r=0,\ldots,L.
\end{equation}
Thus the substantive design choice can be made on the unconditional
scale: choose how much of the total error budget to retain for the
primary statistic and how to divide the remainder among the natural
secondary diagnostics.  Conditional calibration then implements that
choice without exceeding the prescribed population size.

\subsection{Measuring what the secondary statistics add}

The budget decomposition measures the null cost of each stage.  Its
power analogue measures the benefit obtained under a specified
alternative.

Let ${P}_1$ be some alternative distribution.
Define
\begin{align}
{Q}_1 = {Q}_{P_1},
\end{align}
the distribution of samples of size $n$ from ${P}_1$.

Then the power of the chain test against the alternative ${P}_1$ is
\begin{equation}
\pi({P}_1)
=
{Q}_1 \bigl(\mathcal{R}^{(L)}\bigr).
\end{equation}
Since the incremental rejection regions are disjoint,
\begin{equation}
\label{eq:power:decomposition}
\pi({P}_1)
=
\sum_{r=0}^{L} {Q}_1(\mathcal{R}_r)
=
\sum_{r=0}^{L} \Delta_r (P_1)
.
\end{equation}
The quantity
\[
\Delta_r(P_1)=Q_1(\mathcal R_r)
\]
is the probability that a sample generated under \(P_1\) is first
rejected at stage \(r\). We refer to it as the ordered first-rejection
contribution of stage \(r\). In general, it depends on the position of
the statistic in the chain and should not be interpreted as an
order-free measure of the statistic's intrinsic importance.

For a proposed augmentation, the pair
\[
\bigl(b_r,\Delta_r(P_1)\bigr)
\]
summarizes the stage-\(r\) exchange: \(b_r\) is the unconditional null
budget spent on that stage, while \(\Delta_r(P_1)\) is the resulting
first-rejection probability under the alternative.  Comparing these
quantities across alternatives is the main purpose of the chain
decomposition.

\subsection{A secondary question: order invariance}
\label{sec:order-invariance}

The augmentation principle does not require order invariance: the
primary statistic is intentionally placed first, and secondary
statistics may also be ordered by design.  Nevertheless, it is useful
to know when changing the order of the secondary diagnostics leaves
the final test unchanged.

The final rejection region generally depends on the order because the
null distribution used to calibrate a later statistic is conditional
on the acceptance regions of all earlier statistics.  The following
result gives a sufficient condition under which this dependence
disappears at the population level.

Hold the primary acceptance region \(B_0\), and hence
\(\mathcal A_0\), fixed across orderings.  For each secondary statistic
\(T_j\), \(j=1,\ldots,L\), let
\(\eta_j\in[0,1)\) be a conditional rejection level attached to that
statistic, and let \(B_j\) be the unique acceptance region produced by
the chosen population calibration rule from
\begin{equation}
\label{eq:secondary-given-primary}
Q_0\!\left(T_j(X)\notin B_j\mid X\in\mathcal A_0\right)
=
\eta_j.
\end{equation}
When the statistics are reordered, \(\eta_j\) is carried with
\(T_j\); it is not attached to the numerical position of that
statistic in the chain.

\begin{proposition}[Order invariance under conditional null independence]
\label{prop:order-invariance}
Suppose that \(Q_0(\mathcal A_0)>0\) and that
\(T_1(X),\ldots,T_L(X)\) are mutually conditionally independent under
\(Q_0\) given \(X\in\mathcal A_0\).  Equivalently, for all Borel sets
\(C_1,\ldots,C_L\),
\begin{equation}
\label{eq:conditional-null-independence}
Q_0\!\left(
\bigcap_{j=1}^L\{T_j(X)\in C_j\}
\,\middle|\,
X\in\mathcal A_0
\right)
=
\prod_{j=1}^L
Q_0\!\left(
T_j(X)\in C_j
\,\middle|\,
X\in\mathcal A_0
\right).
\end{equation}
Assume also that the statistic-specific levels \(\eta_j\) and the
calibration rule defining each \(B_j\) are unchanged when the
secondary statistics are permuted.

Then, for every permutation \(\sigma\) of
\(\{1,\ldots,L\}\), sequential population calibration of the chain
\[
T_0\longrightarrow T_{\sigma(1)}\longrightarrow\cdots
\longrightarrow T_{\sigma(L)}
\]
uses the same statistic-specific acceptance regions
\(B_1,\ldots,B_L\).  Its final acceptance and rejection regions are
\begin{equation}
\label{eq:order-invariant-regions}
\mathcal A_L^{\sigma}
=
\mathcal A_0
\cap
\bigcap_{j=1}^L\{x:T_j(x)\in B_j\},
\qquad
\mathcal R^{\sigma}
=
(\mathcal A_L^{\sigma})^c,
\end{equation}
and therefore do not depend on \(\sigma\).  Consequently, for every
distribution \(Q\) on \(\mathcal X\), not only for \(Q_0\),
\begin{equation}
\label{eq:order-invariant-power}
Q(\mathcal R^{\sigma})
=
Q(\mathcal R^{\tau})
\end{equation}
for all permutations \(\sigma\) and \(\tau\).  In particular, the
null rejection probability is
\begin{equation}
\label{eq:order-invariant-size}
Q_0(\mathcal R^{\sigma})
=
1-(1-\gamma_0)\prod_{j=1}^L(1-\eta_j).
\end{equation}
The ordered first-rejection contributions need not be invariant.
\end{proposition}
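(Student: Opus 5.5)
The proof runs by induction along the permuted chain. The inductive claim is that, at stage \(k\) of the chain \(T_0\to T_{\sigma(1)}\to\cdots\to T_{\sigma(L)}\), the calibration equation for \(T_{\sigma(k)}\) coincides with the statistic-specific equation \eqref{eq:secondary-given-primary}. Fix \(\sigma\) and write \(\mathcal A^\sigma_k=\mathcal A_0\cap\bigcap_{i=1}^k\{T_{\sigma(i)}\in B^\sigma_{\sigma(i)}\}\), where \(B^\sigma_{\sigma(i)}\) denotes the region actually produced at stage \(i\). The inductive hypothesis is that \(B^\sigma_{\sigma(i)}=B_{\sigma(i)}\) for all \(i<k\).

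At stage \(k\), the population calibration asks for a set \(C\) such that
\[
Q_0\bigl(T_{\sigma(k)}\notin C\mid \mathcal A^\sigma_{k-1}\bigr)=\eta_{\sigma(k)} .
\]
This is a legitimate requirement because the level \(\eta_j\) travels with \(T_j\). Write the left-hand side as
\[
\frac{Q_0\bigl(T_{\sigma(k)}\notin C,\ \bigcap_{i<k}\{T_{\sigma(i)}\in B_{\sigma(i)}\}\mid\mathcal A_0\bigr)}{Q_0\bigl(\bigcap_{i<k}\{T_{\sigma(i)}\in B_{\sigma(i)}\}\mid\mathcal A_0\bigr)} .
\]
Apply \eqref{eq:conditional-null-independence} to the numerator and to the denominator, taking the unused \(C_j=\mathbb R\). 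The common factors cancel, and the ratio reduces to \(Q_0(T_{\sigma(k)}\notin C\mid\mathcal A_0)\).

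Cancelling requires the denominator to be positive. Since \(\eta_j<1\), each factor equals \(1-\eta_{\sigma(i)}>0\), and \(Q_0(\mathcal A_0)>0\) by assumption, so this holds. The stage-\(k\) equation is therefore exactly \eqref{eq:secondary-given-primary} for \(j=\sigma(k)\). The calibration rule is the same and yields a unique region, so \(B^\sigma_{\sigma(k)}=B_{\sigma(k)}\), which completes the induction.

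The induction gives \eqref{eq:order-invariant-regions} immediately, since \(\mathcal A^\sigma_L\) is an intersection of the same sets in a different order. Hence \(\mathcal R^\sigma\) does not depend on \(\sigma\). Equation \eqref{eq:order-invariant-power} then holds trivially for any \(Q\), because the sets themselves are equal.

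For the size, compute \(Q_0(\mathcal A_L)=Q_0(\mathcal A_0)\,Q_0(\bigcap_j\{T_j\in B_j\}\mid\mathcal A_0)\). The first factor is \(1-\gamma_0\), and by conditional independence the second is \(\prod_j(1-\eta_j)\). Taking complements gives \eqref{eq:order-invariant-size}.

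To show that the first-rejection contributions need not be invariant, it suffices to note that \(\mathcal R^\sigma_k=\mathcal A^\sigma_{k-1}\setminus\mathcal A^\sigma_k\) depends on which statistics precede stage \(k\). Under \(Q_0\), the contribution of \(T_j\) is \((1-\gamma_0)\eta_j\prod_{\text{earlier }i}(1-\eta_i)\), which changes with position whenever the \(\eta_i\) are positive. That suffices for the “need not” claim.

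The only delicate step is the reduction of the conditional equation given \(\mathcal A^\sigma_{k-1}\) to the one given \(\mathcal A_0\). It requires the independence statement to cover complements and intersections of events of the form \(\{T_j\in B_j\}\); this holds because \eqref{eq:conditional-null-independence} holds for all Borel \(C_j\), provided the \(B_j\) are Borel. It also requires uniqueness of the calibration output, which the hypotheses supply.
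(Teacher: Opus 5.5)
Your proof is correct and follows essentially the same route as the paper's: conditional independence given \(\mathcal A_0\) shows that acceptance by earlier secondary statistics leaves the conditional null law of \(T_j\) unchanged, so uniqueness of the calibration rule gives the same \(B_j\) at every position, after which the intersection is order-free and its null probability factors. Your explicit induction and your check that the conditioning event has probability \(Q_0(\mathcal A_0)\prod_{i<k}(1-\eta_{\sigma(i)})>0\) make precise what the paper's argument, stated for an arbitrary accepted subset \(S\), uses implicitly.
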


\paragraph{Proof idea.}
Conditional independence implies that, after primary acceptance,
conditioning on acceptance by any subset of the other secondary
statistics does not change the null distribution of \(T_j\).
Consequently, the uniquely calibrated region \(B_j\) is the same at
every position.  The final survivor region is therefore the
order-free intersection in~\eqref{eq:order-invariant-regions};
conditional independence also factors its null probability, giving
\eqref{eq:order-invariant-size}.  The full proof is given in
Appendix~\ref{appendix_proof_order_invariance}.

The requirement that \(\eta_j\) remain attached to \(T_j\) is
substantive.  It is automatically satisfied when all secondary
conditional levels are equal.  By contrast, when fixed unconditional
first-rejection budgets are converted into conditional levels, a
statistic's conditional level may change with its position.  In that
case Proposition~\ref{prop:order-invariance} does not assert exact
invariance, although it suggests approximate invariance when the
resulting conditional levels and boundaries are close.

Under the assumptions of
Proposition~\ref{prop:calibration-consistency}, the corresponding
Monte Carlo boundaries converge almost surely to the common
population boundaries in~\eqref{eq:order-invariant-regions}.  Hence,
for any \(Q\) assigning zero probability to those boundaries and any
two permutations \(\sigma,\tau\),
\[
Q\!\left(
\widehat{\mathcal R}^{\sigma}_m
\mathbin{\triangle}
\widehat{\mathcal R}^{\tau}_m
\right)
\longrightarrow0,
\]
where \(\triangle\) denotes symmetric difference, so finite-calibration order differences vanish asymptotically under
the proposition's conditions.

\subsection{Matched baseline: weighted minimum-\texorpdfstring{\(p\)}{p}
and hyperrectangle tests}
\label{sec:minp-hyperrectangle}

The experiment compares the chain's boundary-selection rule with a
weighted minimum-\(p\) rule whose weights match the same allocation
proportions.  This is a useful priority-weighted comparator, but it
does not isolate ``sequential application'' from ``simultaneous
application'': after calibration, both procedures are fixed
coordinatewise intersection rules.  They differ in how their
coordinate thresholds are selected.  The following elementary
equivalence shows that the weighted minimum-\(p\) baseline is
precisely a hyperrectangle test in component \(p\)-value space.

For \(j=1,\ldots,K\), let \(p_j:\mathcal X\to[0,1]\) be a component
\(p\)-value.  All components are computed from the same observed
sample.  Let \(w_j\geq0\) satisfy \(\sum_{j=1}^K w_j=1\), and define
the active coordinate set \(J_w=\{j:w_j>0\}\).  The weighted
minimum-\(p\) statistic is
\begin{equation}
\label{eq:minp-general}
M_w(x)
=
\min_{j\in J_w}\frac{p_j(x)}{w_j}.
\end{equation}

\begin{proposition}[Equivalence of weighted minimum-\(p\) and hyperrectangle tests]
\label{prop:minp-hyperrectangle}
Fix \(c\geq0\) such that \(cw_j\leq1\) for every \(j\in J_w\), and
use the same weak or strict inequality convention throughout.  Define
the hyperrectangle acceptance region in component \(p\)-value space by
\begin{equation}
\label{eq:pvalue-hyperrectangle}
\mathcal H_w(c)
=
\left\{
p\in[0,1]^K:
p_j>cw_j\text{ for every }j\in J_w
\right\},
\end{equation}
with zero-weight coordinates unrestricted.  Then, pointwise for every
sample \(x\),
\begin{align}
M_w(x)\leq c
&\quad\Longleftrightarrow\quad
\exists j\in J_w:\ p_j(x)\leq cw_j
\nonumber\\
&\quad\Longleftrightarrow\quad
\bigl(p_1(x),\ldots,p_K(x)\bigr)
\notin\mathcal H_w(c).
\label{eq:minp-hyperrectangle-equivalence}
\end{align}
Thus the weighted minimum-\(p\) test and the hyperrectangle test with
coordinate thresholds \(u_j=cw_j\) have identical rejection regions.
They consequently have identical size and power under every
distribution of the data, without any independence assumption.

Conversely, let \(u_j\in[0,1]\), not all zero, define
\(J_u=\{j:u_j>0\}\), and consider the lower-threshold hyperrectangle
that rejects when \(p_j(x)\leq u_j\) for at least one
\(j\in J_u\), with the remaining coordinates unrestricted.  Setting
\begin{equation}
\label{eq:hyperrectangle-to-minp}
c=\sum_{j=1}^K u_j,
\qquad
w_j=\frac{u_j}{c}
\end{equation}
represents the same test as a weighted minimum-\(p\) test.  Hence the
two descriptions parameterize the same class of lower-threshold
hyperrectangular rejection rules in component \(p\)-value space.
\end{proposition}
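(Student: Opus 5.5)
The proof is a pointwise set-theoretic argument; no probabilistic reasoning enters until the final sentence of each part. I will fix a sample \(x\) and the convention ``reject when \(\leq\)'' (the strict case is identical with \(<\) replaced throughout).

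\emph{Forward direction.} I will first check that \(J_w\) is nonempty. This follows from \(\sum_j w_j=1\), so \(M_w(x)\) is a minimum of finitely many real numbers \(p_j(x)/w_j\), \(j\in J_w\). A finite minimum is at most \(c\) if and only if some term is at most \(c\). For \(j\in J_w\) we have \(w_j>0\), so \(p_j(x)/w_j\leq c\) is equivalent to \(p_j(x)\leq cw_j\). This gives the first equivalence in \eqref{eq:minp-hyperrectangle-equivalence}.

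The second equivalence is the negation of the definition of \(\mathcal H_w(c)\) in \eqref{eq:pvalue-hyperrectangle}. Membership requires \(p_j>cw_j\) for all \(j\in J_w\), and imposes nothing on the zero-weight coordinates. The vector \((p_1(x),\ldots,p_K(x))\) lies in \([0,1]^K\), so failure of membership is exactly the existence of some \(j\in J_w\) with \(p_j(x)\leq cw_j\). The condition \(cw_j\leq1\) is used only to ensure that the thresholds \(u_j=cw_j\) are legitimate points of \([0,1]\), so the rule is a hyperrectangle in \([0,1]^K\).

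\emph{Size and power.} The two rejection regions coincide as subsets of \(\mathcal X\). Hence their probabilities agree under any law \(Q\) of \(X\), with no independence or dependence assumptions needed.

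\emph{Converse.} Given \(u\) not identically zero, set \(c=\sum_j u_j>0\) and \(w_j=u_j/c\). Then \(w_j\geq0\), \(\sum_j w_j=1\), \(J_w=J_u\), and \(cw_j=u_j\leq1\), so the hypotheses of the forward part hold. Applying the forward part, \(M_w(x)\leq c\) if and only if some \(j\in J_u\) has \(p_j(x)\leq u_j\), which is exactly the given lower-threshold hyperrectangle rule. The two directions together show that the two parameterizations describe the same class of rules.

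The only delicate points are bookkeeping. One must exclude zero weights from the minimum to avoid division by zero, ensure \(J_w\ne\emptyset\), and keep the weak/strict convention consistent. I do not expect a substantive obstacle.
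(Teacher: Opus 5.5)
Your proposal is correct and follows essentially the same pointwise argument as the paper: expand the finite minimum over \(J_w\), clear the positive denominators \(w_j\), identify the complement of \(\mathcal H_w(c)\), and for the converse substitute \(c=\sum_j u_j\), \(w_j=u_j/c\). Your extra checks (\(J_w\neq\emptyset\), \(c>0\), \(J_w=J_u\), and \(cw_j=u_j\leq1\)) simply make explicit details the paper leaves implicit.
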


\paragraph{Proof idea.}
The equivalence follows pointwise by expanding the minimum:
\[
\min_{j\in J_w}\frac{p_j(x)}{w_j}\leq c
\quad\Longleftrightarrow\quad
\exists j\in J_w:\ p_j(x)\leq cw_j.
\]
Conversely, any lower-threshold hyperrectangle with thresholds
\(u_j\) is recovered by taking
\(c=\sum_j u_j\) and \(w_j=u_j/c\).  The full proof is given in
Appendix~\ref{appendix_proof_minp}.

The equivalence is algebraic and therefore continues to hold when the
common threshold is estimated by simulation.  For every realized
Monte Carlo threshold \(\widehat c\), both descriptions make the same
decision on every evaluation sample, provided that they use the same
component \(p\)-values and the same convention for boundary ties.
Thus a jointly calibrated weighted minimum-\(p\) test should not also
be counted as a distinct hyperrectangle baseline.  The result does
not say that the particular weighted minimum-\(p\) baseline and the
chain have the same boundaries.  The chain selects later boundaries
from conditional survivor distributions, whereas the baseline jointly
calibrates a common multiplier applied to prespecified marginal
weights.  Matching the weight proportions therefore does not generally
match the chain's coordinate thresholds or its unconditional
first-rejection budgets.

If each set \(\{x:p_j(x)>u_j\}\) is the pullback of a marginal
statistic acceptance interval \(B_j(u_j)\), then the pullback of
\(\mathcal H_w(c)\) to statistic space is the corresponding Cartesian
product of those intervals.  This includes one-sided component tests
and the usual two-sided component \(p\)-values.  Conversely, the
calibrated chain is itself a hyperrectangle in statistic space.  The
empirical comparison below should therefore be read as a comparison
of two boundary-selection criteria within this broad rectangular
class, not as evidence that evaluating the same fixed bounds
sequentially would change the decision.

\subsection{Design choices and prespecification}
\label{sec:design-prespecification}

The size calculations above treat the statistics, their tail rules,
their order, and the budgets \(b_0,\ldots,b_L\) as fixed parts of the
test design.  In particular, the observed sample must not be inspected
to decide which secondary statistic to add or how much budget to give
it.  Such post hoc selection changes the rejection rule and can
invalidate the stated null level unless the complete selection
procedure is reproduced inside the null calibration.

A practical design may instead use scientific knowledge and
independent design simulations.  One can prespecify departure classes
that matter in the application, choose interpretable diagnostics for
those classes, and select a small secondary budget using a stated
utility, prior weighting, or worst-case power criterion.  If the
design simulations are used to tune the allocation, their alternative
grid and objective should be reported, and performance should be
checked on additional alternatives.  The null calibration is then
applied only after the complete rule has been fixed.

The same requirement applies to the ordering.  The primary statistic
is placed first by scientific priority.  Secondary order can be chosen
by interpretability or by an externally specified priority, but the
ordered contributions should not be searched post hoc for the most
favorable attribution.

\subsection{Possible generalizations}

The same principle may be extended to stages containing several
statistics by calibrating a joint stagewise acceptance region under
the corresponding conditional null distribution. A two-sample
version may likewise be constructed using permutation calibration
under an appropriate exchangeability null. We leave a detailed study
of these extensions to future work.

\section{Monte Carlo calibration and supporting guarantees}

The augmentation principle can be implemented with one reusable bank
of null simulations.  The results in this section justify that
implementation and then describe a separate observation-specific
construction when exact finite-\(m\) null size is required.  These
results support the main power-allocation argument; they do not change
the substantive choice of primary and secondary statistics.

\subsection{Reusable calibration of the stage boundaries}

Let
\begin{equation}
\label{eq:sample:of:samples:null}
X_1^*,\ldots,X_m^*
\stackrel{\mathrm{iid}}{\sim}
Q_0
\end{equation}
be independent null sample vectors, each representing a sample of size
\(n\).

Let \(I_{r,m}\) be the indices of calibration vectors that pass every
filter through stage \(r\).

We start with
\begin{equation}
I_{-1,m}
=
\{1,\ldots,m\}.
\end{equation}
At each stage \(r=0,\ldots,L\), we construct the acceptance region
\(\widehat B_{r,m}\) so that
\begin{equation}
\label{eq:bounds:simulation}
\widehat{\gamma}_{r,m}
:=
\frac{
    \#\left\{
        i\in I_{r-1,m}:
        T_r(X_i^*)\notin\widehat B_{r,m}
    \right\}
}{
    |I_{r-1,m}|
}
\approx
\gamma_r
\end{equation}
provided that \( |I_{r-1,m}|>0 \).

After that, we define
\begin{equation}
I_{r,m}
=
\left\{
i\in I_{r-1,m}
:
T_r(X_i^*)
\in
\widehat{B}_{r,m}
\right\}.
\end{equation}

\subsection{Strong consistency of sequential Monte Carlo calibration}
\label{sec:calibration-consistency}

We now justify the calibration procedure when the acceptance regions
are constructed from conditional empirical quantiles.  Define the
empirical measure of the calibration sample by
\[
\mathbb Q_m(C)
=
\frac{1}{m}\sum_{i=1}^m \mathbf 1\{X_i^*\in C\},
\qquad C\subseteq\mathcal X,
\]
and define the random survivor regions
\[
\widehat{\mathcal A}_{-1,m}=\mathcal X,
\qquad
\widehat{\mathcal A}_{r,m}
=
\widehat{\mathcal A}_{r-1,m}
\cap
\{x:T_r(x)\in\widehat B_{r,m}\}.
\]
Then
\(I_{r,m}=\{i:X_i^*\in\widehat{\mathcal A}_{r,m}\}\) and hence
\[
\frac{|I_{r,m}|}{m}
=
\mathbb Q_m(\widehat{\mathcal A}_{r,m}).
\]
For the population construction, let
\[
F_r(t)
=
Q_0\!\left(T_r(X)\le t\mid X\in\mathcal A_{r-1}\right)
\]
whenever \(Q_0(\mathcal A_{r-1})>0\).

\begin{proposition}[Strong consistency of sequential Monte Carlo calibration]
\label{prop:calibration-consistency}
Suppose that \(L<\infty\) and that, at every active stage \(r\), the
population acceptance interval \(B_r\) is determined by a fixed finite
collection of quantiles of \(F_r\).  Construct
\(\widehat B_{r,m}\) from the corresponding quantiles of the empirical
conditional distribution based on \(I_{r-1,m}\).  Assume that:
\begin{enumerate}
\item \(Q_0(\mathcal A_{r-1})>0\) for every \(r=0,\ldots,L\);
\item each quantile used to define \(B_r\) is unique, and \(F_r\) is
continuous and strictly increasing in a neighborhood of that
quantile;
\item an inactive stage has \(\gamma_r=0\) and is represented by
\(B_r=\widehat B_{r,m}=\mathbb R\).
\end{enumerate}
Then, with \(Q_0^\infty\)-probability one, simultaneously for all
\(r=0,\ldots,L\) as \(m\to\infty\):
\begin{enumerate}
\item every finite endpoint of \(\widehat B_{r,m}\) converges to the
corresponding endpoint of \(B_r\);
\item
\begin{equation}
\label{eq:survivor-consistency}
\frac{|I_{r,m}|}{m}
\longrightarrow
Q_0(\mathcal A_r)
=
\prod_{s=0}^r(1-\gamma_s);
\end{equation}
\item the empirical conditional rejection proportion in
\eqref{eq:bounds:simulation} satisfies
\begin{equation}
\label{eq:gammahat-consistency}
\widehat\gamma_{r,m}\longrightarrow\gamma_r;
\end{equation}
\item the population conditional rejection probability of the
Monte Carlo region satisfies
\begin{equation}
\label{eq:gammatilde-consistency}
\widetilde\gamma_{r,m}
:=
Q_0\!\left(
T_r(X)\notin\widehat B_{r,m}
\mid
X\in\widehat{\mathcal A}_{r-1,m}
\right)
\longrightarrow
\gamma_r.
\end{equation}
\end{enumerate}
Consequently, if
\(\widehat{\mathcal R}^{(L)}_m
=\widehat{\mathcal A}_{L,m}^{c}\), then
\begin{equation}
\label{eq:size-consistency}
Q_0\!\left(\widehat{\mathcal R}^{(L)}_m\right)
\longrightarrow
1-\prod_{r=0}^L(1-\gamma_r).
\end{equation}
In particular, when the conditional levels satisfy
\eqref{eq:alpha:decomposition:alternative}, the calibrated chain has
asymptotic null rejection probability \(\alpha\).
\end{proposition}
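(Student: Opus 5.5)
The plan is induction on the stage index \(r\), carrying the following hypothesis \(\mathcal H_r\): on a single probability-one event \(\Omega_0\), the endpoints of \(\widehat B_{s,m}\) converge to those of \(B_s\) for all \(s\le r\). The event \(\Omega_0\) is fixed in advance so that the whole argument is deterministic on it. Its key property is a Glivenko--Cantelli statement uniform over a VC class. Consider the class \(\mathcal C\) of sets
\[
\bigcap_{s=0}^{L}\{x:T_s(x)\in J_s\},
\]
where each \(J_s\) is an interval or \(\mathbb R\), together with the same sets intersected with \(\{x:T_r(x)\le t\}\). This class is a finite intersection of pullbacks of half-lines under fixed measurable maps, hence VC. Therefore
\[
\sup_{C\in\mathcal C}|\mathbb Q_m(C)-Q_0(C)|\to0
\]
almost surely, and \(\Omega_0\) is the intersection of this event with nothing else. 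Every \(\widehat{\mathcal A}_{r,m}\) and every set \(\widehat{\mathcal A}_{r-1,m}\cap\{T_r\le t\}\) lies in \(\mathcal C\), even though these sets are data-dependent. Uniformity therefore handles the dependence of the regions on the same calibration sample, which would otherwise be the main obstacle.

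The second ingredient is a continuity lemma. Suppose the endpoints of \(\widehat B_{s,m}\) converge to those of \(B_s\) for \(s\le r\), and that \(Q_0(T_s(X)=e)=0\) at each finite endpoint \(e\) of \(B_s\). This boundary-null condition follows from continuity of \(F_s\) near the defining quantiles; for an endpoint \(e\) of \(B_s\), the relevant mass is \(Q_0(\mathcal A_{s-1}\cap\{T_s=e\})=0\). Under these conditions,
\[
Q_0\!\left(\widehat{\mathcal A}_{r,m}\mathbin{\triangle}\mathcal A_r\right)\to0 .
\]
The lemma follows from the inclusion
\[
\widehat{\mathcal A}_{r,m}\triangle\mathcal A_r\subseteq\bigcup_{s\le r}\{T_s\in\widehat B_{s,m}\triangle B_s\}\cap\mathcal A_{s-1}\ \cup\ (\text{earlier-stage differences}),
\]
where each set \(\{T_s\in\widehat B_{s,m}\triangle B_s\}\) shrinks into small neighbourhoods of the endpoints of \(B_s\). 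The earlier-stage differences are handled by induction, and each remaining term is bounded by mass near an endpoint within \(\mathcal A_{s-1}\), which vanishes.

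For the inductive step, assume \(\mathcal H_{r-1}\) (vacuous when \(r=0\)). On \(\Omega_0\), uniform convergence combined with the lemma gives
\[
\mathbb Q_m(\widehat{\mathcal A}_{r-1,m})\to Q_0(\mathcal A_{r-1})>0 .
\]
In the same way, the empirical conditional CDF
\[
\widehat F_{r,m}(t)=\frac{\mathbb Q_m(\widehat{\mathcal A}_{r-1,m}\cap\{T_r\le t\})}{\mathbb Q_m(\widehat{\mathcal A}_{r-1,m})}
\]
converges to \(F_r(t)\) uniformly in \(t\). For the numerator, replacing \(\widehat{\mathcal A}_{r-1,m}\) by \(\mathcal A_{r-1}\) costs at most \(Q_0(\widehat{\mathcal A}_{r-1,m}\triangle\mathcal A_{r-1})\), uniformly in \(t\). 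The standard quantile argument then applies: since \(F_r\) is strictly increasing near the unique quantile \(q\), for every \(\varepsilon>0\) we have \(F_r(q-\varepsilon)<p<F_r(q+\varepsilon)\). Uniform convergence pins the empirical \(p\)-quantile inside \((q-\varepsilon,q+\varepsilon)\) eventually, which gives conclusion (1) and hence \(\mathcal H_r\). Inactive stages are trivial because \(B_r=\widehat B_{r,m}=\mathbb R\).

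The remaining conclusions follow directly. Conclusion (2) is \(\mathbb Q_m(\widehat{\mathcal A}_{r,m})\to Q_0(\mathcal A_r)\), obtained exactly as above; the product form comes from the earlier identity for \(Q_0(\mathcal A_r)\). For (3), \(\widehat\gamma_{r,m}=1-|I_{r,m}|/|I_{r-1,m}|\) is a ratio of convergent quantities with positive limit. For (4), \(\widetilde\gamma_{r,m}=1-Q_0(\widehat{\mathcal A}_{r,m})/Q_0(\widehat{\mathcal A}_{r-1,m})\), and the continuity lemma gives convergence to \(1-Q_0(\mathcal A_r)/Q_0(\mathcal A_{r-1})=\gamma_r\). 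The size statement is \(Q_0(\widehat{\mathcal A}^c_{L,m})\to1-Q_0(\mathcal A_L)\).

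The step I expect to be delicate is the boundary-null condition. I need \(Q_0(\mathcal A_{s-1}\cap\{T_s=e\})=0\), and continuity of \(F_s\) near the quantile gives exactly this. I would also check that an endpoint defined by a lower or upper tail quantile is the same point as the corresponding quantile \(q\). Finally, weak versus strict inequalities in defining \(B_r\) are immaterial, because the boundary has zero mass.
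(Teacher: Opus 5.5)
Your proposal is correct and follows essentially the same route as the paper's proof. Both arguments rest on the same ingredients: a Glivenko--Cantelli bound over the pullback of axis-aligned rectangles, which absorbs the data-dependence of the survivor regions; an induction over stages that combines uniform convergence of the empirical conditional CDF with quantile consistency; and the estimate $Q_0(\widehat{\mathcal A}_{r,m}\triangle\mathcal A_r)\to0$ obtained from continuity at the endpoints. From these, (2)--(4) and the size statement follow by ratios and complements. Your explicit inclusion $\widehat{\mathcal A}_{r,m}\triangle\mathcal A_r\subseteq\bigcup_{s\le r}\bigl(\mathcal A_{s-1}\cap\{T_s\in\widehat B_{s,m}\triangle B_s\}\bigr)$, and your remark that the whole argument is deterministic on the single Glivenko--Cantelli event, make explicit what the paper states more tersely.
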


\paragraph{Proof idea.}
The vector of statistics induces a finite-dimensional class of
axis-aligned rectangles, which is Glivenko--Cantelli.  Uniform
convergence over this class controls the random survivor regions.
An induction over the stages then gives uniform convergence of each
conditional empirical distribution function, consistency of its
quantile endpoints, and convergence of the survivor probabilities.
The conditional rejection probabilities and the final null size
follow by taking ratios and complements.  The full proof is given in
Appendix~\ref{appendix_proof_consistency}.

Proposition~\ref{prop:calibration-consistency} concerns randomness in
the calibration regions.  It also yields consistency of the
independent evaluation step.  Conditional on the calibration sample,
\(\widehat\pi_{m,h}(P_1)\) is an average of \(h\) Bernoulli variables
with mean \(Q_1(\widehat{\mathcal R}^{(L)}_m)\), and therefore
\[
\Pr\!\left(
\left|
\widehat\pi_{m,h}(P_1)
-Q_1(\widehat{\mathcal R}^{(L)}_m)
\right|>\varepsilon
\,\middle|\,
X_1^*,\ldots,X_m^*
\right)
\le
\frac{1}{4h\varepsilon^2}.
\]
In the null case \(P_1=P_0\), it follows that, as \(m,h\to\infty\),
\(\widehat\alpha_{m,h}\to\alpha\) in probability.  More generally, if
\(Q_1\) assigns zero probability to the population boundaries, then
\(\widehat\pi_{m,h}(P_1)\to\pi(P_1)\) in probability.

\subsection{An optional exact pooled-rank version}
\label{sec:exact-pooled-rank}

The preceding construction estimates acceptance regions from a null
calibration sample and then applies those regions to the observed
sample.  Its null rejection probability is asymptotically correct by
Proposition~\ref{prop:calibration-consistency}, but exact equality at a
fixed \(m\) need not be attainable because empirical quantiles lie on
a discrete grid.  We now describe an observation-specific alternative
that obtains exact finite-\(m\) null level by pooling the observed
sample with the simulated null samples and treating all pooled rows
symmetrically.

Use the same unconditional first-rejection budgets \(b_r\) defined
in~\eqref{eq:unconditional-stage-budget}.  Thus
\begin{equation}
\label{eq:pooled-budget-sum}
\sum_{r=0}^L b_r=\alpha.
\end{equation}

For the theoretical null analysis, let
\[
X_0^*,X_1^*,\ldots,X_m^*
\stackrel{\mathrm{iid}}{\sim}Q_0,
\qquad M=m+1.
\]
In an application, \(X_0^*\) is replaced by the observed realization
\(x\), while \(X_1^*,\ldots,X_m^*\) are generated from \(Q_0\).  Let
\(U\sim\operatorname{Uniform}[0,1)\), independently of all pooled
samples, and define
\begin{equation}
\label{eq:pooled-cumulative-counts}
C_{-1,M}=0,
\qquad
C_{r,M}
=
M\sum_{s=0}^r b_s,
\end{equation}
and
\begin{equation}
\label{eq:pooled-stage-counts}
K_{r,M}
=
\left\lfloor C_{r,M}+U\right\rfloor
-
\left\lfloor C_{r-1,M}+U\right\rfloor.
\end{equation}
The integer \(K_{r,M}\) is the number of pooled rows assigned to first
rejection at stage \(r\).  Since
\[
E\{\lfloor c+U\rfloor\}=c
\]
for every \(c\geq0\),
\begin{equation}
\label{eq:pooled-count-expectation}
E(K_{r,M})=Mb_r.
\end{equation}
Moreover,
\begin{equation}
\label{eq:pooled-total-count}
K_M
:=
\sum_{r=0}^L K_{r,M}
=
\lfloor M\alpha+U\rfloor,
\qquad
E(K_M)=M\alpha.
\end{equation}

Set
\[
\widetilde I_{-1,M}=\{0,1,\ldots,m\}.
\]
At stage \(r\), among the indices in
\(\widetilde I_{r-1,M}\), apply the prespecified tail rule for
\(T_r\) and select exactly \(K_{r,M}\) indices having the most extreme
values of \(T_r(X_i^*)\).  Denote the selected set by
\[
\widetilde D_{r,M}
\subseteq
\widetilde I_{r-1,M},
\qquad
|\widetilde D_{r,M}|=K_{r,M},
\]
and update
\begin{equation}
\label{eq:pooled-survivor-recursion}
\widetilde I_{r,M}
=
\widetilde I_{r-1,M}\setminus\widetilde D_{r,M}.
\end{equation}
For a right-tailed statistic, the \(K_{r,M}\) largest surviving values
are selected.  For an equal-tailed statistic, the count is divided
between the smallest and largest surviving values; if the count is
odd, an independent fair randomization determines which tail receives
the additional row.  Ties are resolved using iid continuous random
priorities attached to the pooled rows.  More generally, all that is
required below is that the stagewise selection rule be
permutation-equivariant and select exactly \(K_{r,M}\) surviving
indices.

The pooled-rank chain rejects the observed sample if and only if
\begin{equation}
\label{eq:pooled-rejection-decision}
0\in\bigcup_{r=0}^L\widetilde D_{r,M}.
\end{equation}
If index zero first belongs to \(\widetilde D_{r,M}\), the observed
sample is first rejected at stage \(r\).
In the following result, \(\Pr_0\) denotes joint probability over the
pooled null samples and all auxiliary randomization.

\begin{proposition}[Exact finite-\texorpdfstring{\(m\)}{m} validity of the pooled-rank chain]
\label{prop:pooled-rank-exact}
Suppose that \(Q_0\) is fixed and that, under \(H_0\),
\(X_0^*,\ldots,X_m^*\) are iid from \(Q_0\).  Suppose also that all
rounding, tail-allocation, and tie-breaking variables are independent
of the pooled samples, and that every stagewise selection rule is
permutation-equivariant.  Then, including the auxiliary
randomization,
\begin{equation}
\label{eq:pooled-exact-stage}
\Pr_0\!\left(0\in\widetilde D_{r,M}\right)=b_r,
\qquad r=0,\ldots,L,
\end{equation}
and
\begin{equation}
\label{eq:pooled-exact-size}
\Pr_0\!\left(
0\in\bigcup_{r=0}^L\widetilde D_{r,M}
\right)
=
\alpha.
\end{equation}
Furthermore, whenever
\(\Pr_0(0\in\widetilde I_{r-1,M})>0\),
\begin{equation}
\label{eq:pooled-exact-conditional}
\Pr_0\!\left(
0\in\widetilde D_{r,M}
\mid
0\in\widetilde I_{r-1,M}
\right)
=
\gamma_r.
\end{equation}
\end{proposition}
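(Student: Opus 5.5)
The argument is exchangeability. Condition on the realized rounding variable \(U\); the counts \(K_{0,M},\ldots,K_{L,M}\) are then deterministic. The pooled rows \(X_0^*,\ldots,X_m^*\) are iid, and the tie-breaking priorities and tail-allocation coins are independent of them. I attach each row's priority to the row itself, so the augmented rows \((X_i^*,\xi_i)\) are still iid. The fair tail-allocation coins do not depend on the labelling. Since every stagewise selection rule is permutation-equivariant, relabelling the pooled rows by a permutation \(\pi\) of \(\{0,\ldots,m\}\) maps the selected sets \(\widetilde D_{r,M}\) to \(\pi(\widetilde D_{r,M})\), and does so stage by stage. I would prove this by induction on \(r\): the survivor set \(\widetilde I_{r-1,M}\) transforms equivariantly, and the selection of exactly \(K_{r,M}\) extreme survivors then transforms equivariantly too. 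Because the joint law of the rows is invariant under relabelling, the random indicator vector \((\mathbf 1\{i\in\widetilde D_{r,M}\})_{i=0}^m\) is exchangeable given \(U\).

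From exchangeability I get \(\Pr_0(0\in\widetilde D_{r,M}\mid U)=\Pr_0(i\in\widetilde D_{r,M}\mid U)\) for every \(i\). Summing over \(i\) and using \(|\widetilde D_{r,M}|=K_{r,M}\) gives \(\Pr_0(0\in\widetilde D_{r,M}\mid U)=K_{r,M}/M\). For this I need that the stage always has at least \(K_{r,M}\) survivors. That holds because \(\sum_r K_{r,M}=\lfloor M\alpha+U\rfloor\le M\) when \(\alpha<1\), the case \(\alpha=1\) being trivial. Averaging over \(U\) and using \eqref{eq:pooled-count-expectation} then gives \eqref{eq:pooled-exact-stage}.

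The sets \(\widetilde D_{r,M}\) are disjoint by the recursion \eqref{eq:pooled-survivor-recursion}, so summing over \(r\) and using \eqref{eq:pooled-budget-sum} gives \eqref{eq:pooled-exact-size}. For the conditional statement, \(0\in\widetilde I_{r-1,M}\) holds exactly when \(0\notin\bigcup_{s<r}\widetilde D_{s,M}\). By disjointness this event has probability \(1-\sum_{s<r}b_s\), and \(\{0\in\widetilde D_{r,M}\}\) is contained in it. Dividing and applying \eqref{eq:budget-to-gamma} yields \eqref{eq:pooled-exact-conditional}.

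I expect the main obstacle to be justifying the equivariance and exchangeability step rigorously. Continuous ties under \(Q_0\) are possible if a statistic has atoms. The odd-count tail split uses extra coins, and these must be shown not to depend on row labels. Both issues are handled by folding the iid priorities into the rows and noting that the coins are drawn independently of the labels. Once the augmented sample is exchangeable and the selection is a labelling-free function of it, the rest of the argument is the counting identity.
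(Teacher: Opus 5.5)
Your proposal is correct and follows the paper's proof essentially step for step: you condition on the randomized counts (you condition on \(U\), which is equivalent here), use exchangeability of the priority-augmented rows together with permutation equivariance to get \(\Pr_0(0\in\widetilde D_{r,M}\mid U)=K_{r,M}/M\), and average to obtain \(b_r\); disjointness then gives the size, and dividing by \(1-\sum_{s<r}b_s\) gives the conditional level \(\gamma_r\). Your explicit checks, that every stage has at least \(K_{r,M}\) survivors and that equivariance propagates through the survivor recursion, fill in details the paper leaves implicit rather than changing the route.
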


\paragraph{Proof idea.}
Conditional on the randomized integer stage counts, exchangeability
of the pooled rows and permutation equivariance give every row the
same probability \(K_{r,M}/M\) of receiving the stage-\(r\)
first-rejection label.  Averaging over the systematic rounding yields
the unconditional budget \(b_r\).  The stagewise rejection sets are
disjoint, so their probabilities sum to \(\alpha\); division by the
survival probability through stage \(r-1\) gives \(\gamma_r\).
The full proof is given in
Appendix~\ref{appendix_proof_pooled_rank}.

The exactness in Proposition~\ref{prop:pooled-rank-exact} includes the
auxiliary randomization.  If \(M\alpha\) is an integer, then
\(K_M=M\alpha\) for every value of \(U\), so the total pooled rank
level is already exactly \(\alpha\).  If randomization is undesirable,
one may instead use \(K_M=\lfloor M\alpha\rfloor\); the resulting
nonrandomized test has null rejection probability
\(\lfloor M\alpha\rfloor/M\leq\alpha\), with the stage allocations
rounded to the finite rank grid.

For continuous statistics, selecting prescribed lower and upper ranks
is equivalent to choosing order-statistic boundaries.  Thus the
pooled procedure does select stagewise boundaries indirectly.  Unlike
\(\widehat B_{r,m}\), however, these implicit boundaries depend on the
observed sample because \(X_0^*=x\) participates in every pooled
ranking.  They are therefore observation-specific rather than one set
of reusable calibration regions.

This distinction explains why the pooled-rank construction is a
finite-\(m\) alternative rather than a replacement for the procedure
implemented below.  The null-only calibration method constructs
\(\widehat B_{0,m},\ldots,\widehat B_{L,m}\) once and can apply them to
many observations or to a large power-evaluation sample.  The
pooled-rank method gives an exact Monte Carlo decision for one observed
sample, but a new pooled ranking is required for each observation.
Accordingly, the experiment in Section~4 studies the reusable
null-calibrated procedure.  No equality of finite-\(m\) power between
the two constructions is asserted.  Exactness also requires a fixed
null law \(Q_0\), or another simulation scheme that preserves pooled
exchangeability; it does not automatically extend to a composite null
with parameters estimated from the observed sample.

\subsection{Estimating power and Type I error}

To evaluate power, we simulate
\begin{equation}
\label{eq:sample:of:samples:alternative}
Y_1^*,\ldots,Y_h^*
\stackrel{\mathrm{iid}}{\sim}
Q_1
\end{equation}
and then take 
\begin{equation}
\label{eq:estimate:power}
\widehat{\pi}_{m,h} ({P}_1) = \frac{ 
\#\left\{
        i\in\{1,\ldots,h\}:
        \exists r\in\{0,\ldots,L\}
        \text{ such that }
        T_r(Y_i^*)\notin\widehat{B}_{r,m}
    \right\}
}{h} 
\end{equation}
as a power estimate. 

Type I error estimate then is
\begin{equation}
\widehat{\alpha}_{m,h} = \widehat{\pi}_{m,h} ({P}_0),
\end{equation}
i.e. we use the same formula as for power estimate, but simulate $Y_1^*, \dots, Y_h^*$ from ${Q}_0$ instead of ${Q}_1$.

Both \(\widehat{\alpha}_{m,h}\) and
\(\widehat{\pi}_{m,h}(P_1)\) are random variables. Their variability arises from
the calibration sample \(X_1^*,\ldots,X_m^*\), through the estimated
regions \(\widehat B_{r,m}\), and from the independent evaluation
sample \(Y_1^*,\ldots,Y_h^*\).
Therefore, it is desirable to report not a single power estimate but
the distribution induced by the complete calibration/evaluation
procedure.  In the experiments below, we repeat
\eqref{eq:sample:of:samples:null} and
\eqref{eq:sample:of:samples:alternative} \(t\) times.  We report the
empirical mean and its Monte Carlo standard error.  The corresponding
repetition-to-repetition standard deviation is the standard error
multiplied by \(\sqrt t\); it describes the variability of one
complete finite-\(m\), finite-\(h\) run rather than the precision of
the reported average.

Because procedures are evaluated on the same samples within each
repetition, their differences are paired.  For comparisons of primary
interest we therefore also report the standard error of the
within-repetition power difference; marginal standard errors alone do
not measure uncertainty in a paired comparison.

The same repetitions can be used to quantify
calibration-to-calibration variation of every estimated boundary.
Appendix
\ref{appendix_boundary_uncertainty} separates the order-statistic
uncertainty at a fixed stage from the additional uncertainty propagated
through the preceding estimated filters.

In addition, we estimate the ordered first-rejection contributions
$\Delta_r(P_1)$ by
\begin{equation}
\widehat{\Delta}_{r;m,h} ({P}_1) = 
\frac{\# \left\{ i \in \left\{ 1,\dots,h \right\} : T_s(Y_i^*)\in\widehat B_{s,m}
  \text{ for every }s<r, \; T_r({Y}_i^*) \notin  \widehat{B}_{r,m}  \right\}}{h}.
\end{equation}
In sample terms, the power decomposition~\eqref{eq:power:decomposition} becomes:
\begin{equation}
\widehat{\pi}_{m,h} ({P}_1) = \sum_{r = 0}^L \widehat \Delta_{r;m,h} ({P}_1).
\end{equation}
If ${P}_1 = {P}_0$, we get Type I error decomposition instead.

\section{Experiment: augmenting KS with variance and skewness}

The experiment is designed as a direct test of the augmentation
principle rather than as an exhaustive comparison of normality tests.
The Kolmogorov--Smirnov statistic supplies broad empirical-CDF
sensitivity.  Variance and skewness are added because they encode two
simple facts about a standard normal sample: its scale should be close
to one and its distribution should be symmetric.  We ask how much
power these diagnostics add when they receive only a small part of
the KS error budget.  The study uses one fully specified null, one
sample size, and one primary statistic; it demonstrates the allocation
mechanism but does not establish superiority to the wider class of
normality tests.  Comparisons with other omnibus procedures, additional
sample sizes, and composite nulls with estimated parameters are
separate empirical questions. The complete code of the experiment is presented in Appendices~\ref{appendix_a} and~\ref{appendix_b}.

\subsection{Primary and secondary diagnostics}
For arbitrary sample realisation $x^* = (x_1^*, \dots, x^*_n)$, let $\bar{x}^* = \frac{1}{n} \sum_{i = 1}^n x_i^*$ denote the sample mean, and $x_{(i)}^*$ denote the $i$th order statistic of the sample. 

As test statistics we use
\begin{itemize}
\item 
the Kolmogorov--Smirnov distance between the empirical cumulative
distribution function of the sample and the null cumulative
distribution function \(F_0\)
\begin{equation}
T_{\mathrm{KS}}(x^*) = \max_{1\le i\le n} \max
\left(
\frac{i}{n}-F_0(x_{(i)}^*), \,
F_0(x_{(i)}^*)-\frac{i-1}{n}
\right), 
\end{equation}
\item
the sample variance
\begin{equation}
T_{\mathrm{Var}}(x^*)
=
\frac{1}{n-1}
\sum_{i=1}^{n}
\left(x_i^*-\bar{x}^*\right)^2,
\end{equation}
\item
and the sample skewness
\begin{equation}
T_{\mathrm{Skew}}(x^*)
=
\frac{
    \displaystyle
    \frac{1}{n}
    \sum_{i=1}^{n}
    \left(x_i^*-\bar{x}^*\right)^3
}{
    \displaystyle
    \left[
        \frac{1}{n-1}
        \sum_{i=1}^{n}
        \left(x^*_i-\bar{x}^*\right)^2
    \right]^{3/2}
}.
\end{equation}
\end{itemize}

\noindent
For this experiment, \(T_{\mathrm{KS}}\) is the primary statistic,
while \(T_{\mathrm{Var}}\) and \(T_{\mathrm{Skew}}\) are the secondary
diagnostics.

We investigate the four augmented procedures obtained from one or
both secondary statistics:
\begin{equation}
T_{\mathrm{KS}} \to T_{\mathrm{Var}}; \qquad
T_{\mathrm{KS}} \to T_{\mathrm{Skew}}; \qquad
T_{\mathrm{KS}} \to T_{\mathrm{Var}} \to T_{\mathrm{Skew}}; \qquad
T_{\mathrm{KS}} \to T_{\mathrm{Skew}} \to T_{\mathrm{Var}}. 
\end{equation}  

\subsection{Calibration of the augmented test}

We describe the calibration for
\(T_{\mathrm{KS}}\to T_{\mathrm{Var}}\to T_{\mathrm{Skew}}\);
the other chains are constructed analogously.

We parameterize the augmentation on the unconditional budget scale.
Let \(\rho\in[0,1]\) be the fraction of the total level transferred
from KS to the two secondary statistics, and let
\(\omega\in[0,1]\) be the fraction of the transferred amount assigned
to variance.  The first-rejection budgets are
\begin{equation}
\label{eq:experiment-allocation-budgets}
b_{\mathrm{KS}}=\alpha(1-\rho),\qquad
b_{\mathrm{Var}}=\alpha\rho\omega,\qquad
b_{\mathrm{Skew}}=\alpha\rho(1-\omega).
\end{equation}
For the order
\(\mathrm{KS}\to\mathrm{Var}\to\mathrm{Skew}\), the corresponding
conditional levels are obtained from~\eqref{eq:budget-to-gamma}:
\begin{equation}
\label{eq:experiment-allocation-gammas}
\gamma_0=b_{\mathrm{KS}},\qquad
\gamma_1=\frac{b_{\mathrm{Var}}}{1-b_{\mathrm{KS}}},
\qquad
\gamma_2=
\frac{b_{\mathrm{Skew}}}
{1-b_{\mathrm{KS}}-b_{\mathrm{Var}}}.
\end{equation}
For the reverse order, the secondary labels are interchanged.  This
choice yields population first-rejection probabilities equal to the
three budgets in~\eqref{eq:experiment-allocation-budgets} and total
null rejection probability \(\alpha\).  In simulation, the resulting
size is approximate because the acceptance regions are estimated from
a finite calibration sample.

For the fixed null distribution $P_0$
 we simulate:
\begin{equation*}
X_1^*,\ldots,X_m^*
\sim
Q_0.
\end{equation*}
For notational convenience, in this subsection we write
\[
\widehat B_{\mathrm{KS}}
=
\widehat B_{0,m},
\qquad
\widehat B_{\mathrm{Var}}
=
\widehat B_{1,m},
\qquad
\widehat B_{\mathrm{Skew}}
=
\widehat B_{2,m}.
\]
Having $\gamma_0$ and $X_1^*,\ldots,X_m^*$, we find $\widehat{B}_{\mathrm{KS}}$ by solving 
\begin{equation}
\label{eq:gamma_0:KS}
\widehat{\gamma}_{0,m}
:=
\frac{\# \left\{ i \in \left\{ 1,\dots,m \right\}  : T_{\mathrm{KS}} (X_i^*) \notin \widehat{B}_{\mathrm{KS}}\right\}}{ m } \approx \gamma_0.
\end{equation} 
We use a one-sided upper acceptance interval based on the empirical
$(1-\gamma_0)$-quantile of the simulated KS statistics:
\begin{equation}
\widehat{B}_{\mathrm{KS}} = \left(-\infty,\, \widehat{b}_{\mathrm{KS}} \right],
\end{equation}
\begin{equation}
\widehat{b}_{\mathrm{KS}} =
 \mathrm{q}_{1 - \gamma_0} 
 \left( 
 	\bigl(  
		T_{\mathrm{KS}} (X_i^*) 
	 \bigl)_{ i = 1 }^m
\right),
\end{equation}
where $\mathrm{q}_\gamma(A)$ is a quantile of level $\gamma$ for multiset  $A$. 
The implementation uses R's type-7 empirical quantile.  Acceptance
interval endpoints are included, so rejection uses strict violations
of the displayed bounds.  Under the continuous null used here,
boundary ties have probability zero.

Exact equality in~\eqref{eq:gamma_0:KS} need not be attainable because the empirical distribution is discrete and because the empirical quantile convention may interpolate between adjacent order statistics.

Then we determine
\begin{equation}
I_{0,m}
=
\left\{
i \in \left\{ 1,\dots,m \right\} 
:
T_{\mathrm{KS}}(X_i^*)
\in
\widehat{B}_{\mathrm{KS}}
\right\},
\end{equation}
and proceed to solving 
\begin{equation}
\widehat{\gamma}_{1,m}
:=
\frac{\# \left\{ i \in I_{0,m} : T_{\mathrm{Var}} (X_i^*) \notin \widehat{B}_{\mathrm{Var}} \right\}}{ |I_{0,m}| } \approx \gamma_1
\end{equation} 
with equal-tailed two-sided quantile-based acceptance interval
\begin{equation}
\widehat{B}_{\mathrm{Var}} = \left[ \widehat{b}^-_{\mathrm{Var}},\, \widehat{b}^+_{\mathrm{Var}} \right],
\end{equation}
\begin{equation}
\widehat{b}^-_{\mathrm{Var}} = \mathrm{q}_{\gamma_1 / 2} (A_{\mathrm{Var}} ),\qquad 
\widehat{b}^+_{\mathrm{Var}} = \mathrm{q}_{1 - \gamma_1 / 2}(A_{\mathrm{Var}} ),\qquad
A_{\mathrm{Var}} =  \big ( T_{\mathrm{Var}} (X_i^*)  \big )_{i \in I_{0,m}}.
\end{equation}

Finally, we find
\begin{equation}
I_{1,m}
=
\left\{
i \in I_{0,m}
:
T_{\mathrm{Var}}(X_i^*)
\in
\widehat{B}_{\mathrm{Var}}
\right\},
\end{equation}
solve 
\begin{equation}
\widehat{\gamma}_{2,m}
:=
 \frac{\# \left\{ i \in I_{1,m} : T_{\mathrm{Skew}} (X_i^*) \notin \widehat{B}_{\mathrm{Skew}}\right\}}{ |I_{1,m}| } \approx \gamma_2.
\end{equation} 
with
\begin{equation}
\widehat{B}_{\mathrm{Skew}} = \left[ \widehat{b}^-_{\mathrm{Skew}},\, \widehat{b}^+_{\mathrm{Skew}} \right],
\end{equation}
\begin{equation}
\widehat{b}^-_{\mathrm{Skew}} = \mathrm{q}_{\gamma_2 / 2} (A_{\mathrm{Skew}} ),\qquad 
\widehat{b}^+_{\mathrm{Skew}} = \mathrm{q}_{1 - \gamma_2 / 2} (A_{\mathrm{Skew}}),\qquad
A_{\mathrm{Skew}} =  \big( T_{\mathrm{Skew}} (X_i^*)  \big)_{i \in I_{1,m}}.
\end{equation}
The $\widehat{B}_{\mathrm{KS}}$, $\widehat{B}_{\mathrm{Var}}$, $\widehat{B}_{\mathrm{Skew}}$  calibration stage is then complete.

Having obtained all the estimated acceptance regions, we independently
simulate evaluation samples from the alternative distribution:
\[
Y_1^*,\ldots,Y_h^*
\stackrel{\mathrm{iid}}{\sim}
Q_1.
\]
We then estimate the power by
\begin{equation}
\widehat{\pi}_{m,h}(P_1)
=
1-
\frac{
    \#\left\{
        i\in\{1,\ldots,h\}:
        T_{\mathrm{KS}}(Y_i^*)\in\widehat B_{\mathrm{KS}},
        \;
        T_{\mathrm{Var}}(Y_i^*)\in\widehat B_{\mathrm{Var}},
        \;
        T_{\mathrm{Skew}}(Y_i^*)\in\widehat B_{\mathrm{Skew}}
    \right\}
}{
    h
}.
\end{equation}

\noindent
We also compute
\begin{equation}
\widehat{\Delta}_{\mathrm{KS}} ({P}_1) = \frac{\# \left\{ i \in \left\{ 1,\dots,h \right\} : T_{\mathrm{KS}}({Y}_i^*) \notin  \widehat{B}_{\mathrm{KS}}  \right\}}{h},
\end{equation}
\begin{equation}
\widehat{\Delta}_{\mathrm{Var}} ({P}_1) = \frac{\# \left\{ i \in \left\{ 1,\dots,h \right\} : T_{\mathrm{KS}}({Y}_i^*) \in  \widehat{B}_{\mathrm{KS}}, \; T_{\mathrm{Var}}({Y}_i^*) \notin  \widehat{B}_{\mathrm{Var}}  \right\}}{h},
\end{equation}
\begin{equation}
\widehat{\Delta}_{\mathrm{Skew}} ({P}_1) = \frac{\# \left\{ i \in \left\{ 1,\dots,h \right\} : T_{\mathrm{KS}}({Y}_i^*) \in  \widehat{B}_{\mathrm{KS}}, \; T_{\mathrm{Var}}({Y}_i^*) \in  \widehat{B}_{\mathrm{Var}}, \; T_{\mathrm{Skew}}({Y}_i^*) \notin  \widehat{B}_{\mathrm{Skew}} \right\}}{h},
\end{equation}
to get the power decomposition
\begin{equation}
\widehat{\pi}_{m,h} ({P}_1) = \widehat\Delta_{\mathrm{KS}} ({P}_1) + \widehat\Delta_{\mathrm{Var}} ({P}_1) + \widehat\Delta_{\mathrm{Skew}} ({P}_1).
\end{equation}
To get Type I error decomposition we set $P_1 = P_0$:
\begin{equation}
\widehat{\alpha}_{m,h}  = \widehat\Delta_{\mathrm{KS}} ({P}_0) + \widehat\Delta_{\mathrm{Var}} ({P}_0) + \widehat\Delta_{\mathrm{Skew}} ({P}_0).
\end{equation}

\subsection{Power gained from a small secondary allocation}

In all experiments, the null distribution is
\[
P_0=\mathcal N(0,1),
\]
the sample size is \(n=10\), and the nominal level is
\(\alpha=0.05\).  Each of the \(t=100\) repetitions uses \(m=10^6\)
null sample vectors for calibration and \(h=10^6\) independent sample
vectors for evaluation under the null and under each alternative.

The results in Tables~\ref{tab:power-normal}--\ref{tab:null-decomposition}
are taken from the allocation
\[
\rho=0.0075,\qquad \omega=0.5.
\]
We use this point as an interpretable small-budget illustration, not
as an estimated optimal allocation.  In particular, the unweighted
alternative grid below is descriptive and is not a probability model
for future alternatives.  Section~\ref{sec:allocation-sensitivity}
reports the full prespecified grid so that conclusions do not depend
on this single display point.
In terms of unconditional null first-rejection probabilities, the
three-stage chains therefore use
\[
b_0=0.049625,\qquad
b_{\mathrm{Var}}=b_{\mathrm{Skew}}=0.0001875.
\]
Thus \(99.25\%\) of the total \(0.05\) rejection budget remains with
KS and only \(0.75\%\), equal to an absolute probability of
\(0.000375\), is transferred to the two secondary statistics.
For \(\mathrm{KS}\to\mathrm{Var}\to\mathrm{Skew}\), the corresponding
conditional levels are
\[
\gamma_0=0.049625,\qquad
\gamma_1=0.0001972905,\qquad
\gamma_2=0.0001973295.
\]
Because the secondary intervals are equal-tailed, each secondary tail
contains about \(m b_{\mathrm{Var}}/2=93.75\) expected first-rejection
calibration observations.  This effective tail count, rather than
\(m\) alone, explains why finite-\(m\) boundary uncertainty remains a
meaningful diagnostic even with \(m=10^6\).
The reverse order uses the same two secondary conditional levels, with
their statistic labels interchanged.  A two-stage chain uses the same
primary budget \(b_0\) and assigns the complete secondary budget
\(\alpha\rho=0.000375\) to its single secondary statistic; its
secondary conditional level is \(0.0003945811\).  Thus the two-stage
and three-stage procedures use the same total secondary budget, while
the latter divides it equally between variance and skewness.

The jointly calibrated weighted minimum-\(p\) baseline uses weights
\[
(w_{\mathrm{KS}},w_{\mathrm{Var}},w_{\mathrm{Skew}})
=(0.9925,0.00375,0.00375).
\]
As shown in Proposition~\ref{prop:minp-hyperrectangle}, this is also
the corresponding hyperrectangle test in component \(p\)-value space.
For finite Monte Carlo samples, calibration-bank component
\(p\)-values are self-ranks \(r/m\), equivalently leave-one-out
add-one ranks for continuous statistics, while evaluation
\(p\)-values use the add-one denominator \(m+1\).  The minimum-\(p\)
rule uses a strict rejection inequality throughout.  This convention
is asymptotically equivalent to population marginal \(p\)-values; the
independent null bank in Table~\ref{tab:null-decomposition} checks its
finite-\(m\) size directly.
The raw KS, variance, and skewness tests each use their full nominal
level \(0.05\).

In all six tables, a value in parentheses is the Monte Carlo standard
error of the displayed mean across the 100 repetitions.  It equals
the repetition-to-repetition standard deviation divided by
\(\sqrt{100}\).  The use of \(h=10^6\) makes these standard errors
substantially smaller than in the preliminary sensitivity run.
For the variation of one complete calibration/evaluation run, the
relevant standard deviation is therefore ten times the number in
parentheses.  For example, for the three-stage chain it is
\(0.00268\) against \(\mathcal N(0,0.2^2)\),
\(0.00473\) against \(\mathcal N(0,2^2)\), and \(0.00368\)
against \(\operatorname{Cauchy}(0,0.1)\).  These values combine
calibration and evaluation variation; they are not boundary-only
uncertainty measures.

In the tables below,
\[
G_a^\star=\frac{Y-a}{\sqrt a},
\qquad
Y\sim\operatorname{Gamma}(\mathrm{shape}=a,\mathrm{rate}=1).
\]

\subsubsection{Normal location and scale alternatives}

\begin{table}[H]
\centering
\small
\setlength{\tabcolsep}{3.2pt}
\begin{threeparttable}
\caption{Estimated power against normal location and scale alternatives at
\(\rho=0.0075\) and \(\omega=0.5\).}
\label{tab:power-normal}
\begin{tabular}{lcccccccc}
\toprule
Procedure & $\mathcal N(0.1,1)$ & $\mathcal N(0.5,1)$ & $\mathcal N(1,1)$ & $\mathcal N(0,0.1^2)$ & $\mathcal N(0,0.2^2)$ & $\mathcal N(0,1.5^2)$ & $\mathcal N(0,2^2)$ & $\mathcal N(0,3^2)$ \\
\midrule
KS & \bestms{0.0585}{0.00004} & \bestms{0.2742}{0.00008} & \bestms{0.7746}{0.00008} & \ms{0.9903}{0.00003} & \ms{0.4693}{0.00030} & \ms{0.1291}{0.00006} & \ms{0.2463}{0.00007} & \ms{0.4644}{0.00008} \\
Variance & \ms{0.0500}{0.00003} & \ms{0.0500}{0.00003} & \ms{0.0500}{0.00003} & \bestms{1.0000}{0.00000} & \bestms{1.0000}{0.00000} & \bestms{0.4904}{0.00008} & \bestms{0.8552}{0.00005} & \bestms{0.9895}{0.00001} \\
Skewness & \ms{0.0500}{0.00003} & \ms{0.0500}{0.00003} & \ms{0.0500}{0.00003} & \ms{0.0500}{0.00003} & \ms{0.0500}{0.00003} & \ms{0.0500}{0.00003} & \ms{0.0500}{0.00003} & \ms{0.0500}{0.00003} \\
\addlinespace
Weighted min-$p$ & \ms{0.0585}{0.00004} & \ms{0.2737}{0.00008} & \ms{0.7741}{0.00008} & \bestms{1.0000}{0.00000} & \ms{0.9416}{0.00072} & \ms{0.2034}{0.00027} & \ms{0.6009}{0.00048} & \ms{0.9548}{0.00011} \\
\addlinespace
$\mathrm{KS}\rightarrow\mathrm{Var}$ & \ms{0.0584}{0.00004} & \ms{0.2733}{0.00008} & \ms{0.7737}{0.00008} & \bestms{1.0000}{0.00000} & \ms{0.9940}{0.00008} & \ms{0.2252}{0.00024} & \ms{0.6373}{0.00035} & \ms{0.9622}{0.00007} \\
$\mathrm{KS}\rightarrow\mathrm{Skew}$ & \ms{0.0584}{0.00004} & \ms{0.2733}{0.00008} & \ms{0.7736}{0.00008} & \ms{0.9898}{0.00003} & \ms{0.4644}{0.00030} & \ms{0.1288}{0.00006} & \ms{0.2457}{0.00007} & \ms{0.4636}{0.00008} \\
\addlinespace
$\mathrm{KS}\rightarrow\mathrm{Var}\rightarrow\mathrm{Skew}$ & \ms{0.0584}{0.00004} & \ms{0.2733}{0.00008} & \ms{0.7737}{0.00008} & \bestms{1.0000}{0.00000} & \ms{0.9827}{0.00027} & \ms{0.2055}{0.00027} & \ms{0.6049}{0.00047} & \ms{0.9557}{0.00010} \\
$\mathrm{KS}\rightarrow\mathrm{Skew}\rightarrow\mathrm{Var}$ & \ms{0.0584}{0.00004} & \ms{0.2733}{0.00008} & \ms{0.7737}{0.00008} & \bestms{1.0000}{0.00000} & \ms{0.9827}{0.00027} & \ms{0.2055}{0.00027} & \ms{0.6049}{0.00047} & \ms{0.9557}{0.00010} \\
\bottomrule
\end{tabular}
\begin{tablenotes}[flushleft]\footnotesize
\item Entries are mean estimated rejection probabilities with Monte
Carlo standard errors in parentheses.  Bold indicates the largest
unrounded estimated power in a column; exact maxima are all shown in
bold.
\end{tablenotes}
\end{threeparttable}
\end{table}

The raw KS test is strongest among the omnibus procedures for the
three location shifts, while the raw variance and skewness tests
remain near \(0.05\).  Reserving \(0.75\%\) of the Type~I error budget
for secondary statistics has a small cost.  For example, against
\(\mathcal N(0.5,1)\), power changes from \(0.2742\) for KS to
\(0.2733\) for either three-stage ordering.  The corresponding
weighted minimum-\(p\) power is \(0.2737\).

The variance stage produces large gains against scale alternatives.
Against \(\mathcal N(0,0.2^2)\), power is \(0.4693\) for KS,
\(0.9416\) for weighted minimum-\(p\), and \(0.9827\) for either
three-stage chain.  Against \(\mathcal N(0,2^2)\), the corresponding
values are \(0.2463\), \(0.6009\), and \(0.6049\).  The skewness-only
chain is essentially no better than KS for these symmetric
alternatives.

The two-stage variance chain is stronger than the three-stage chain
for several scale alternatives: for example, its power against
\(\mathcal N(0,2^2)\) is \(0.6373\).  This is expected because the
two-stage procedure assigns the entire secondary budget to variance,
whereas the three-stage procedure assigns half to variance and half
to skewness.  The raw variance test remains the strongest specialized
test because it assigns its entire level \(0.05\) to scale detection.
The comparison therefore illustrates the intended trade-off:
substantial additional scale sensitivity is obtained while retaining
almost all of the primary KS power.

\subsubsection{Cauchy and standardized gamma alternatives}

\begin{table}[H]
\centering
\small
\setlength{\tabcolsep}{3.2pt}
\begin{threeparttable}
\caption{Estimated power against Cauchy and standardized gamma alternatives
at \(\rho=0.0075\) and \(\omega=0.5\).}
\label{tab:power-cauchy-gamma}
\begin{tabular}{lccccccc}
\toprule
Procedure & $\operatorname{Cauchy}(0,0.1)$ & $\operatorname{Cauchy}(0,0.5)$ & $\operatorname{Cauchy}(0,1)$ & $G_{0.1}^{\star}$ & $G_{0.5}^{\star}$ & $G_{1}^{\star}$ & $G_{10}^{\star}$ \\
\midrule
KS & \ms{0.4454}{0.00016} & \ms{0.0445}{0.00003} & \ms{0.1269}{0.00006} & \ms{0.5275}{0.00012} & \ms{0.1837}{0.00007} & \ms{0.1167}{0.00005} & \ms{0.0563}{0.00003} \\
Variance & \bestms{0.7850}{0.00005} & \bestms{0.6473}{0.00005} & \bestms{0.8910}{0.00003} & \ms{0.6825}{0.00006} & \ms{0.3609}{0.00008} & \ms{0.2338}{0.00007} & \ms{0.0708}{0.00004} \\
Skewness & \ms{0.5739}{0.00007} & \ms{0.5739}{0.00006} & \ms{0.5740}{0.00006} & \bestms{0.9206}{0.00004} & \bestms{0.5555}{0.00011} & \bestms{0.3617}{0.00010} & \bestms{0.0865}{0.00004} \\
\addlinespace
Weighted min-$p$ & \ms{0.7027}{0.00038} & \ms{0.5169}{0.00015} & \ms{0.8029}{0.00015} & \ms{0.7169}{0.00028} & \ms{0.2486}{0.00018} & \ms{0.1452}{0.00010} & \ms{0.0576}{0.00003} \\
\addlinespace
$\mathrm{KS}\rightarrow\mathrm{Var}$ & \ms{0.6810}{0.00025} & \ms{0.5159}{0.00012} & \ms{0.8129}{0.00011} & \ms{0.6320}{0.00016} & \ms{0.2194}{0.00009} & \ms{0.1371}{0.00007} & \ms{0.0577}{0.00003} \\
$\mathrm{KS}\rightarrow\mathrm{Skew}$ & \ms{0.6015}{0.00015} & \ms{0.3138}{0.00018} & \ms{0.3732}{0.00017} & \ms{0.7038}{0.00023} & \ms{0.2577}{0.00020} & \ms{0.1471}{0.00012} & \ms{0.0574}{0.00003} \\
\addlinespace
$\mathrm{KS}\rightarrow\mathrm{Var}\rightarrow\mathrm{Skew}$ & \ms{0.7407}{0.00037} & \ms{0.5232}{0.00014} & \ms{0.8045}{0.00015} & \ms{0.7441}{0.00029} & \ms{0.2595}{0.00021} & \ms{0.1504}{0.00012} & \ms{0.0579}{0.00003} \\
$\mathrm{KS}\rightarrow\mathrm{Skew}\rightarrow\mathrm{Var}$ & \ms{0.7407}{0.00037} & \ms{0.5232}{0.00014} & \ms{0.8045}{0.00015} & \ms{0.7441}{0.00029} & \ms{0.2595}{0.00021} & \ms{0.1504}{0.00012} & \ms{0.0579}{0.00003} \\
\bottomrule
\end{tabular}
\begin{tablenotes}[flushleft]\footnotesize
\item Entries are mean estimated rejection probabilities with Monte
Carlo standard errors in parentheses.  Bold indicates the largest
estimated power in a column.
\end{tablenotes}
\end{threeparttable}
\end{table}

For the Cauchy alternatives, the raw variance test is the strongest
specialized procedure, with power between \(0.6473\) and \(0.8910\).
The three-stage chain nevertheless improves sharply on KS: against
\(\operatorname{Cauchy}(0,0.1)\), its power is \(0.7407\), compared
with \(0.4454\) for KS and \(0.7027\) for weighted minimum-\(p\).
Against \(\operatorname{Cauchy}(0,0.5)\), the corresponding values are
\(0.5232\), \(0.0445\), and \(0.5169\).  Both variance and finite-sample
skewness can identify heavy-tailed samples that survive the KS stage.

The relative performance of the two- and three-stage procedures
depends on whether the second secondary statistic compensates for
splitting the budget.  Against \(\operatorname{Cauchy}(0,0.1)\), the
three-stage power \(0.7407\) exceeds the variance-only chain power
\(0.6810\).  Against \(\operatorname{Cauchy}(0,1)\), however, the
variance-only chain has power \(0.8129\), slightly above the
three-stage value \(0.8045\).

For the standardized gamma alternatives, the raw skewness test is
strongest, as expected for asymmetric distributions.  The three-stage
chain raises power against \(G_{0.1}^{\star}\) from \(0.5275\) for KS
to \(0.7441\); weighted minimum-\(p\) reaches \(0.7169\).  Against
\(G_{0.5}^{\star}\), the corresponding values are \(0.1837\),
\(0.2595\), and \(0.2486\).  All procedures approach the nominal level
as the gamma shape increases and the standardized gamma distribution
approaches normality.

\subsubsection{Student alternatives}

\begin{table}[H]
\centering
\small
\setlength{\tabcolsep}{3.2pt}
\begin{threeparttable}
\caption{Estimated power against Student alternatives at
\(\rho=0.0075\) and \(\omega=0.5\).}
\label{tab:power-student}
\begin{tabular}{lcccc}
\toprule
Procedure & $t_1$ & $t_2$ & $t_3$ & $t_4$ \\
\midrule
KS & \ms{0.1269}{0.00005} & \ms{0.0791}{0.00004} & \ms{0.0670}{0.00004} & \ms{0.0618}{0.00004} \\
Variance & \bestms{0.8910}{0.00003} & \bestms{0.6120}{0.00006} & \bestms{0.4232}{0.00007} & \bestms{0.3123}{0.00007} \\
Skewness & \ms{0.5740}{0.00006} & \ms{0.3190}{0.00006} & \ms{0.2126}{0.00006} & \ms{0.1608}{0.00005} \\
\addlinespace
Weighted min-$p$ & \ms{0.8028}{0.00015} & \ms{0.4197}{0.00025} & \ms{0.2322}{0.00021} & \ms{0.1492}{0.00016} \\
\addlinespace
$\mathrm{KS}\rightarrow\mathrm{Var}$ & \ms{0.8129}{0.00011} & \ms{0.4377}{0.00019} & \ms{0.2473}{0.00017} & \ms{0.1602}{0.00014} \\
$\mathrm{KS}\rightarrow\mathrm{Skew}$ & \ms{0.3732}{0.00017} & \ms{0.1561}{0.00011} & \ms{0.0967}{0.00006} & \ms{0.0758}{0.00005} \\
\addlinespace
$\mathrm{KS}\rightarrow\mathrm{Var}\rightarrow\mathrm{Skew}$ & \ms{0.8044}{0.00015} & \ms{0.4227}{0.00025} & \ms{0.2349}{0.00021} & \ms{0.1512}{0.00016} \\
$\mathrm{KS}\rightarrow\mathrm{Skew}\rightarrow\mathrm{Var}$ & \ms{0.8044}{0.00015} & \ms{0.4227}{0.00025} & \ms{0.2349}{0.00021} & \ms{0.1512}{0.00016} \\
\bottomrule
\end{tabular}
\begin{tablenotes}[flushleft]\footnotesize
\item Entries are mean estimated rejection probabilities with Monte
Carlo standard errors in parentheses.  Bold indicates the largest
estimated power in a column.
\end{tablenotes}
\end{threeparttable}
\end{table}

All four Student alternatives are symmetric but heavy-tailed.  The raw
variance test is strongest throughout, and the variance stage supplies
most of the chain's gain over KS.  Against \(t_1\), power rises from
\(0.1269\) for KS to \(0.8044\) for the three-stage chain and \(0.8028\)
for weighted minimum-\(p\).  Against \(t_4\), it rises from \(0.0618\)
to \(0.1512\) and \(0.1492\), respectively.

The variance-only chain is somewhat stronger than the three-stage
chain because it receives the full secondary budget: its powers are
\(0.8129\), \(0.4377\), \(0.2473\), and \(0.1602\).  Skewness still
has finite-sample power against these symmetric alternatives because
heavy tails make extreme values of the two-sided sample skewness
statistic more frequent, but allocating half of the secondary budget
to skewness is less efficient for this family.

\subsubsection{Stagewise decomposition for
\texorpdfstring{\(\mathrm{KS}\rightarrow\mathrm{Var}
\rightarrow\mathrm{Skew}\)}{KS to Var to Skew}}

\begin{table}[p]
\centering
\small
\setlength{\tabcolsep}{2.5pt}
\begin{threeparttable}
\caption{Stagewise power decomposition for
\(\mathrm{KS}\rightarrow\mathrm{Var}\rightarrow\mathrm{Skew}\) at
\(\rho=0.0075\) and \(\omega=0.5\).}
\label{tab:decomp-var-skew}
\begin{tabular}{lccccc}
\toprule
Alternative & Total & $\Delta_{\mathrm{KS}}$ & $\Delta_{\mathrm{Var}}$ & $\Delta_{\mathrm{Skew}}$ & Min-$p$ total \\
\midrule
\multicolumn{6}{l}{\textit{Normal alternatives}} \\
$\mathcal N(0.1,1)$ & \ms{0.058441}{0.000037} & \ms{0.058069}{0.000037} & \ms{0.000186}{0.000002} & \ms{0.000185}{0.000002} & \ms{0.058453}{0.000037} \\
$\mathcal N(0.5,1)$ & \ms{0.273314}{0.000078} & \ms{0.273106}{0.000078} & \ms{0.000112}{0.000001} & \ms{0.000096}{0.000001} & \ms{0.273691}{0.000078} \\
$\mathcal N(1,1)$ & \ms{0.773664}{0.000077} & \ms{0.773608}{0.000077} & \ms{0.000043}{0.000001} & \ms{0.000013}{0.000000} & \ms{0.774069}{0.000078} \\
$\mathcal N(0,0.1^2)$ & \ms{1.000000}{0.000000} & \ms{0.989752}{0.000035} & \ms{0.010248}{0.000035} & \ms{0.000000}{0.000000} & \ms{1.000000}{0.000000} \\
$\mathcal N(0,0.2^2)$ & \ms{0.982654}{0.000268} & \ms{0.464326}{0.000303} & \ms{0.518325}{0.000392} & \ms{0.000002}{0.000000} & \ms{0.941564}{0.000717} \\
$\mathcal N(0,1.5^2)$ & \ms{0.205459}{0.000275} & \ms{0.128405}{0.000060} & \ms{0.076889}{0.000265} & \ms{0.000165}{0.000002} & \ms{0.203351}{0.000270} \\
$\mathcal N(0,2^2)$ & \ms{0.604861}{0.000473} & \ms{0.245326}{0.000073} & \ms{0.359453}{0.000474} & \ms{0.000082}{0.000001} & \ms{0.600877}{0.000482} \\
$\mathcal N(0,3^2)$ & \ms{0.955661}{0.000104} & \ms{0.463359}{0.000082} & \ms{0.492293}{0.000128} & \ms{0.000009}{0.000000} & \ms{0.954794}{0.000107} \\
\addlinespace
\multicolumn{6}{l}{\textit{Cauchy and standardized gamma alternatives}} \\
$\operatorname{Cauchy}(0,0.1)$ & \ms{0.740705}{0.000368} & \ms{0.442982}{0.000155} & \ms{0.216193}{0.000334} & \ms{0.081531}{0.000125} & \ms{0.702721}{0.000377} \\
$\operatorname{Cauchy}(0,0.5)$ & \ms{0.523247}{0.000142} & \ms{0.044133}{0.000029} & \ms{0.460651}{0.000156} & \ms{0.018463}{0.000068} & \ms{0.516879}{0.000150} \\
$\operatorname{Cauchy}(0,1)$ & \ms{0.804480}{0.000146} & \ms{0.126300}{0.000056} & \ms{0.676935}{0.000155} & \ms{0.001246}{0.000007} & \ms{0.802869}{0.000148} \\
$G_{0.1}^{\star}$ & \ms{0.744141}{0.000291} & \ms{0.525592}{0.000126} & \ms{0.092734}{0.000177} & \ms{0.125815}{0.000267} & \ms{0.716943}{0.000283} \\
$G_{0.5}^{\star}$ & \ms{0.259501}{0.000208} & \ms{0.182784}{0.000073} & \ms{0.031250}{0.000061} & \ms{0.045467}{0.000209} & \ms{0.248650}{0.000179} \\
$G_{1}^{\star}$ & \ms{0.150411}{0.000117} & \ms{0.116042}{0.000051} & \ms{0.017621}{0.000046} & \ms{0.016748}{0.000104} & \ms{0.145179}{0.000097} \\
$G_{10}^{\star}$ & \ms{0.057904}{0.000033} & \ms{0.055882}{0.000031} & \ms{0.001213}{0.000007} & \ms{0.000808}{0.000008} & \ms{0.057611}{0.000032} \\
\addlinespace
\multicolumn{6}{l}{\textit{Student alternatives}} \\
$t_1$ & \ms{0.804415}{0.000152} & \ms{0.126303}{0.000046} & \ms{0.676872}{0.000159} & \ms{0.001241}{0.000008} & \ms{0.802815}{0.000152} \\
$t_2$ & \ms{0.422706}{0.000249} & \ms{0.078607}{0.000037} & \ms{0.341488}{0.000253} & \ms{0.002610}{0.000016} & \ms{0.419714}{0.000249} \\
$t_3$ & \ms{0.234856}{0.000212} & \ms{0.066523}{0.000037} & \ms{0.165776}{0.000214} & \ms{0.002556}{0.000016} & \ms{0.232244}{0.000211} \\
$t_4$ & \ms{0.151159}{0.000162} & \ms{0.061376}{0.000039} & \ms{0.087625}{0.000160} & \ms{0.002159}{0.000014} & \ms{0.149163}{0.000162} \\
\bottomrule
\end{tabular}
\begin{tablenotes}[flushleft]\footnotesize
\item Entries are means with Monte Carlo standard errors in
parentheses.  The three first-rejection contributions sum to the chain
total within every repetition.  The last column is a competing total,
not part of the decomposition.
\end{tablenotes}
\end{threeparttable}
\end{table}

Table~\ref{tab:decomp-var-skew} shows where the chain's rejections
occur.  For location shifts, nearly all power comes from KS.  Against
\(\mathcal N(1,1)\), the decomposition is
\[
0.773608+0.000043+0.000013=0.773664.
\]
For normal scale alternatives, variance supplies essentially all of
the secondary gain.  For example,
\[
0.464326+0.518325+0.000002=0.982654
\]
against \(\mathcal N(0,0.2^2)\), up to rounding.

Both secondary statistics can matter for Cauchy alternatives.  Against
\(\operatorname{Cauchy}(0,0.1)\), variance contributes \(0.216193\)
and skewness contributes \(0.081531\) after the KS contribution
\(0.442982\).  For the gamma alternatives, skewness is more prominent:
against \(G_{0.1}^{\star}\), its contribution \(0.125815\) exceeds the
variance contribution \(0.092734\).  For the Student alternatives,
variance accounts for almost all secondary-stage power.

\subsubsection{Stagewise decomposition for
\texorpdfstring{\(\mathrm{KS}\rightarrow\mathrm{Skew}
\rightarrow\mathrm{Var}\)}{KS to Skew to Var}}

\begin{table}[p]
\centering
\small
\setlength{\tabcolsep}{2.5pt}
\begin{threeparttable}
\caption{Stagewise power decomposition for
\(\mathrm{KS}\rightarrow\mathrm{Skew}\rightarrow\mathrm{Var}\) at
\(\rho=0.0075\) and \(\omega=0.5\).}
\label{tab:decomp-skew-var}
\begin{tabular}{lccccc}
\toprule
Alternative & Total & $\Delta_{\mathrm{KS}}$ & $\Delta_{\mathrm{Skew}}$ & $\Delta_{\mathrm{Var}}$ & Min-$p$ total \\
\midrule
\multicolumn{6}{l}{\textit{Normal alternatives}} \\
$\mathcal N(0.1,1)$ & \ms{0.058441}{0.000037} & \ms{0.058069}{0.000037} & \ms{0.000185}{0.000002} & \ms{0.000186}{0.000002} & \ms{0.058453}{0.000037} \\
$\mathcal N(0.5,1)$ & \ms{0.273314}{0.000078} & \ms{0.273106}{0.000078} & \ms{0.000096}{0.000001} & \ms{0.000112}{0.000001} & \ms{0.273691}{0.000078} \\
$\mathcal N(1,1)$ & \ms{0.773664}{0.000077} & \ms{0.773608}{0.000077} & \ms{0.000013}{0.000000} & \ms{0.000043}{0.000001} & \ms{0.774069}{0.000078} \\
$\mathcal N(0,0.1^2)$ & \ms{1.000000}{0.000000} & \ms{0.989752}{0.000035} & \ms{0.000000}{0.000000} & \ms{0.010248}{0.000035} & \ms{1.000000}{0.000000} \\
$\mathcal N(0,0.2^2)$ & \ms{0.982654}{0.000268} & \ms{0.464326}{0.000303} & \ms{0.000015}{0.000000} & \ms{0.518313}{0.000392} & \ms{0.941564}{0.000717} \\
$\mathcal N(0,1.5^2)$ & \ms{0.205463}{0.000275} & \ms{0.128405}{0.000060} & \ms{0.000183}{0.000002} & \ms{0.076875}{0.000266} & \ms{0.203351}{0.000270} \\
$\mathcal N(0,2^2)$ & \ms{0.604868}{0.000474} & \ms{0.245326}{0.000073} & \ms{0.000168}{0.000002} & \ms{0.359374}{0.000474} & \ms{0.600877}{0.000482} \\
$\mathcal N(0,3^2)$ & \ms{0.955663}{0.000104} & \ms{0.463359}{0.000082} & \ms{0.000140}{0.000001} & \ms{0.492163}{0.000128} & \ms{0.954794}{0.000107} \\
\addlinespace
\multicolumn{6}{l}{\textit{Cauchy and standardized gamma alternatives}} \\
$\operatorname{Cauchy}(0,0.1)$ & \ms{0.740705}{0.000368} & \ms{0.442982}{0.000155} & \ms{0.145439}{0.000139} & \ms{0.152284}{0.000323} & \ms{0.702721}{0.000377} \\
$\operatorname{Cauchy}(0,0.5)$ & \ms{0.523249}{0.000142} & \ms{0.044133}{0.000029} & \ms{0.247322}{0.000225} & \ms{0.231794}{0.000225} & \ms{0.516879}{0.000150} \\
$\operatorname{Cauchy}(0,1)$ & \ms{0.804482}{0.000146} & \ms{0.126300}{0.000056} & \ms{0.226389}{0.000210} & \ms{0.451793}{0.000265} & \ms{0.802869}{0.000148} \\
$G_{0.1}^{\star}$ & \ms{0.744142}{0.000291} & \ms{0.525592}{0.000126} & \ms{0.158757}{0.000297} & \ms{0.059792}{0.000167} & \ms{0.716943}{0.000283} \\
$G_{0.5}^{\star}$ & \ms{0.259502}{0.000208} & \ms{0.182784}{0.000073} & \ms{0.058326}{0.000231} & \ms{0.018392}{0.000051} & \ms{0.248650}{0.000179} \\
$G_{1}^{\star}$ & \ms{0.150411}{0.000117} & \ms{0.116042}{0.000051} & \ms{0.022016}{0.000118} & \ms{0.012353}{0.000040} & \ms{0.145179}{0.000097} \\
$G_{10}^{\star}$ & \ms{0.057904}{0.000033} & \ms{0.055882}{0.000031} & \ms{0.000865}{0.000008} & \ms{0.001157}{0.000007} & \ms{0.057611}{0.000032} \\
\addlinespace
\multicolumn{6}{l}{\textit{Student alternatives}} \\
$t_1$ & \ms{0.804418}{0.000152} & \ms{0.126303}{0.000046} & \ms{0.226328}{0.000211} & \ms{0.451787}{0.000263} & \ms{0.802815}{0.000152} \\
$t_2$ & \ms{0.422710}{0.000249} & \ms{0.078607}{0.000037} & \ms{0.065403}{0.000117} & \ms{0.278700}{0.000279} & \ms{0.419714}{0.000249} \\
$t_3$ & \ms{0.234859}{0.000212} & \ms{0.066523}{0.000037} & \ms{0.023586}{0.000063} & \ms{0.144750}{0.000220} & \ms{0.232244}{0.000211} \\
$t_4$ & \ms{0.151162}{0.000162} & \ms{0.061376}{0.000039} & \ms{0.010547}{0.000035} & \ms{0.079239}{0.000161} & \ms{0.149163}{0.000162} \\
\bottomrule
\end{tabular}
\begin{tablenotes}[flushleft]\footnotesize
\item Entries are means with Monte Carlo standard errors in
parentheses.  The three first-rejection contributions sum to the chain
total within every repetition.  The last column is a competing total,
not part of the decomposition.
\end{tablenotes}
\end{threeparttable}
\end{table}

Reversing the two secondary statistics has almost no effect on total
power at this allocation, but it changes attribution.  Against
\(\operatorname{Cauchy}(0,0.5)\), placing variance first attributes
\(0.460651\) to variance and \(0.018463\) to skewness.  Placing
skewness first attributes \(0.247322\) to skewness and \(0.231794\)
to variance.  Similarly, against \(t_1\), the
variance-first decomposition assigns \(0.676872\) to variance and
\(0.001241\) to skewness, whereas the reverse order assigns
\(0.226328\) to skewness and \(0.451787\) to variance.

This is a first-rejection decomposition: when a sample violates both
secondary bounds, the earlier statistic receives the attribution.
Consequently, the contributions are intrinsically order-dependent
even when the union of the rejection regions, and hence total power,
is nearly unchanged.  The simulations do not establish the
conditional independence assumed by Proposition~\ref{prop:order-invariance};
they show only that its order-invariance conclusion is an excellent
approximation for the present statistics and allocation.

\subsubsection{Null rejection probabilities and Type~I error decomposition}

\begin{table}[H]
\centering
\small
\begin{threeparttable}
\caption{Null rejection probabilities and stagewise Type~I error
decomposition at \(\rho=0.0075\) and \(\omega=0.5\).}
\label{tab:null-decomposition}
\begin{tabular}{lcc}
\toprule
Procedure or contribution & Mean & Monte Carlo SE \\
\midrule
\multicolumn{3}{l}{\textit{Raw tests and competing baseline}} \\
KS total & 0.04999491 & 0.00003038 \\
Variance total & 0.04996463 & 0.00002977 \\
Skewness total & 0.05000379 & 0.00002900 \\
Weighted minimum-$p$ total & 0.04999247 & 0.00003081 \\
\addlinespace
\multicolumn{3}{l}{\textit{Two-stage chain $\mathrm{KS}\rightarrow\mathrm{Var}$}} \\
Total & 0.04999282 & 0.00003047 \\
$\Delta_{\mathrm{KS}}$ & 0.04962018 & 0.00003027 \\
$\Delta_{\mathrm{Var}}$ & 0.00037264 & 0.00000273 \\
\addlinespace
\multicolumn{3}{l}{\textit{Two-stage chain $\mathrm{KS}\rightarrow\mathrm{Skew}$}} \\
Total & 0.04999839 & 0.00002976 \\
$\Delta_{\mathrm{KS}}$ & 0.04962018 & 0.00003027 \\
$\Delta_{\mathrm{Skew}}$ & 0.00037821 & 0.00000287 \\
\addlinespace
\multicolumn{3}{l}{\textit{Three-stage chain $\mathrm{KS}\rightarrow\mathrm{Var}\rightarrow\mathrm{Skew}$}} \\
Total & 0.04999782 & 0.00003025 \\
$\Delta_{\mathrm{KS}}$ & 0.04962018 & 0.00003027 \\
$\Delta_{\mathrm{Var}}$ & 0.00018720 & 0.00000213 \\
$\Delta_{\mathrm{Skew}}$ & 0.00019044 & 0.00000191 \\
\addlinespace
\multicolumn{3}{l}{\textit{Three-stage chain $\mathrm{KS}\rightarrow\mathrm{Skew}\rightarrow\mathrm{Var}$}} \\
Total & 0.04999782 & 0.00003025 \\
$\Delta_{\mathrm{KS}}$ & 0.04962018 & 0.00003027 \\
$\Delta_{\mathrm{Skew}}$ & 0.00019047 & 0.00000191 \\
$\Delta_{\mathrm{Var}}$ & 0.00018717 & 0.00000213 \\
\bottomrule
\end{tabular}
\begin{tablenotes}[flushleft]\footnotesize
\item The displayed uncertainty is the standard error of the mean
over 100 independent calibration/evaluation repetitions.  The
three-stage population targets are \(0.049625\), \(0.0001875\), and
\(0.0001875\).
\end{tablenotes}
\end{threeparttable}
\end{table}

All raw tests, chains, and the weighted minimum-\(p\) baseline have
estimated null rejection probabilities close to \(0.05\).  For the
variance-first three-stage chain, the empirical decomposition is
\[
0.04962018+0.00018720+0.00019044=0.04999782.
\]
The target decomposition is
\[
0.049625+0.0001875+0.0001875=0.05.
\]
The total deviation, \(-2.18\times10^{-6}\), is much smaller than its
Monte Carlo standard error \(3.025\times10^{-5}\).  The reverse order
has the same total to the reported precision and nearly identical
statistic-specific contributions.  The two-stage secondary
contributions are also close to their common target \(0.000375\).
These results provide a precise numerical check that the allocation
and primary-level adjustment preserve the intended overall size.

\subsubsection{Sensitivity to the Type~I error allocation}
\label{sec:allocation-sensitivity}

To examine the power trade-off away from the allocation used in
Tables~\ref{tab:power-normal}--\ref{tab:null-decomposition}, define
the budgets through \((\rho,\omega)\) exactly as in
\eqref{eq:experiment-allocation-budgets}, with conditional levels
obtained from~\eqref{eq:budget-to-gamma}.  Thus \(\rho\) controls the
total amount removed from KS, while \(\omega\) controls only its
division between variance and skewness.  For the reverse order,
variance and skewness are interchanged.

We considered
\[
\rho\in\{0,0.0025,0.005,0.0075,0.01,0.02,0.05,0.10,
0.15,0.20,0.30,0.40,0.50\}
\]
and
\[
\omega\in\{1,0.75,0.50,0.25,0\}.
\]
The full simulation design was used at every allocation:
\(m=h=10^6\) and \(t=100\).  Within a repetition, the same calibration
and evaluation samples were reused for all allocations and
procedures, so comparisons across \((\rho,\omega)\), between chain
orders, and with weighted minimum-\(p\) are paired.

For the baseline, the weights are
\[
(w_{\mathrm{KS}},w_{\mathrm{Var}},w_{\mathrm{Skew}})
=(1-\rho,\rho\omega,\rho(1-\omega)),
\]
and the empirical null distribution jointly calibrates the rejection
threshold of
\[
M_w=\min_{j:w_j>0}\frac{p_j}{w_j}.
\]
This is the weighted minimum-\(p\)/hyperrectangle baseline described
in Proposition~\ref{prop:minp-hyperrectangle}.

Across all 65 allocation pairs, mean null rejection probabilities
ranged from \(0.04997535\) to \(0.05003615\) for
\(\mathrm{KS}\to\mathrm{Var}\to\mathrm{Skew}\), from \(0.04997535\)
to \(0.05003529\) for the reverse order, and from \(0.04997624\) to
\(0.05003495\) for weighted minimum-\(p\).  Their Monte Carlo standard
errors ranged from \(2.83\times10^{-5}\) to \(3.39\times10^{-5}\).
The largest standardized deviation from \(0.05\) was \(1.18\)
standard errors.  The largest absolute discrepancy between an
estimated three-stage contribution and its target budget was
\(3.84\times10^{-5}\).  There is therefore no indication that the
power comparisons are driven by material size distortion.

Table~\ref{tab:allocation-sensitivity-main} summarizes the main
trade-off for the equal split \(\omega=0.5\).  Moving from \(\rho=0\)
to \(\rho=0.0075\) lowers average power over the three location
alternatives by \(0.000640\), from \(0.369113\) to \(0.368473\).
Over the other 16 alternatives, average power rises from \(0.258453\)
to \(0.540135\).  Increasing \(\rho\) further continues to improve the
latter average, but the location-power cost becomes progressively
larger.

\begin{table}[H]
\centering
\small
\begin{threeparttable}
\caption{Power sensitivity to the secondary-budget fraction for
\(\omega=0.5\).}
\label{tab:allocation-sensitivity-main}
\begin{tabular}{rrrrrr}
\toprule
\(\rho\) & \(b_0\) & Location & Other & Chain, all & Min-\(p\), all \\
\midrule
0      & 0.050000 & 0.3691 & 0.2585 & 0.2759 & 0.2759 \\
0.0075 & 0.049625 & 0.3685 & 0.5401 & 0.5130 & 0.5053 \\
0.0200 & 0.049000 & 0.3674 & 0.5617 & 0.5310 & 0.5252 \\
0.0500 & 0.047500 & 0.3648 & 0.5846 & 0.5499 & 0.5451 \\
0.1000 & 0.045000 & 0.3602 & 0.6044 & 0.5659 & 0.5618 \\
0.2000 & 0.040000 & 0.3504 & 0.6267 & 0.5831 & 0.5802 \\
0.5000 & 0.025000 & 0.3127 & 0.6611 & 0.6061 & 0.6057 \\
\bottomrule
\end{tabular}
\begin{tablenotes}[flushleft]\footnotesize
\item ``Location'' is the unweighted mean over the three normal
location alternatives, ``Other'' is the unweighted mean over the
remaining 16 alternatives, and ``all'' is the unweighted mean over all
19 alternatives.  These descriptive averages depend on the selected
grid of alternatives and are not an optimization criterion.
\end{tablenotes}
\end{threeparttable}
\end{table}

At the selected allocation, the chain-minus-minimum-\(p\) difference
averaged over all 19 alternatives is \(0.007684\).  The family-average
differences are \(-0.000265\) for location,
\(0.009610\) for normal scale, \(0.015321\) for Cauchy,
\(0.010894\) for standardized gamma, and \(0.002300\) for Student
alternatives.  This is not a uniform dominance result: the weighted
minimum-\(p\) baseline is slightly stronger for the location family,
whereas the conditional chain is stronger on average for the four
families motivating the secondary statistics.

Because these procedures were evaluated on the same simulated
samples, the uncertainty of their difference is smaller than would be
obtained by treating the two power estimates as independent.
Table~\ref{tab:paired-chain-minp} reports representative paired
comparisons at the main allocation.

\begin{table}[H]
\centering
\small
\begin{threeparttable}
\caption{Selected paired power differences between
\(\mathrm{KS}\rightarrow\mathrm{Var}\rightarrow\mathrm{Skew}\) and
weighted minimum-\(p\) at \(\rho=0.0075\), \(\omega=0.5\).}
\label{tab:paired-chain-minp}
\begin{tabular}{lrr}
\toprule
Alternative & Chain minus min-\(p\) & Paired Monte Carlo SE \\
\midrule
\(\mathcal N(0.5,1)\) & \(-0.00037748\) & \(0.00000507\) \\
\(\mathcal N(0,0.2^2)\) & \(0.04109003\) & \(0.00071371\) \\
\(\mathcal N(0,2^2)\) & \(0.00398381\) & \(0.00017931\) \\
\(\operatorname{Cauchy}(0,0.1)\) & \(0.03798418\) & \(0.00044298\) \\
\(\operatorname{Cauchy}(0,0.5)\) & \(0.00636827\) & \(0.00007722\) \\
\(G_{0.1}^{\star}\) & \(0.02719794\) & \(0.00031073\) \\
\(G_{0.5}^{\star}\) & \(0.01085162\) & \(0.00019006\) \\
\(t_1\) & \(0.00160015\) & \(0.00005380\) \\
\(t_4\) & \(0.00199605\) & \(0.00005718\) \\
\bottomrule
\end{tabular}
\begin{tablenotes}[flushleft]\footnotesize
\item Positive values favor the chain.  The standard errors are
computed from the within-repetition differences and quantify Monte
Carlo uncertainty, not uncertainty over a population of alternative
distributions.
\end{tablenotes}
\end{threeparttable}
\end{table}

The paired results confirm that the baseline's small advantage for the
location example and the chain's advantages for the displayed scale,
tail, and skewness examples are larger than Monte Carlo noise.  They
do not establish uniform dominance: the sign and magnitude depend on
the alternative and on the boundary-allocation rule.

The split parameter also matters.  At \(\rho=0.0075\), the chain's
overall averages for
\(\omega=1,0.75,0.5,0.25,0\) are \(0.507275\), \(0.515799\),
\(0.513031\), \(0.505961\), and \(0.344946\), respectively.  The
value \(0.75\) is slightly best for this particular unweighted
aggregate, while the equal split is within \(0.002768\) and provides
a neutral allocation between the two secondary features.  The family
results in Appendix~\ref{appendix_sensitivity} show the expected
trade-off: variance-heavy allocations favor scale and Student
alternatives, while more skewness budget helps the standardized gamma
family.

Finally, the two secondary-statistic orders remain almost
indistinguishable near the main allocation.  Across all 19 alternatives
and all five values of \(\omega\) at \(\rho=0.0075\), the largest
absolute paired mean difference is \(7.18\times10^{-6}\); its paired
Monte Carlo standard error is \(7.23\times10^{-6}\).  Thus the largest
observed difference near the main allocation is itself comparable to
simulation noise.  Over the full and deliberately aggressive grid,
the largest difference is \(0.00138557\), at
\((\rho,\omega)=(0.5,0.25)\) for
\(\mathcal N(0,1.5^2)\), with paired Monte Carlo standard error
\(0.00002599\).  Approximate order invariance is therefore very
accurate at the small allocation used in the main tables, but it
should not be asserted uniformly over arbitrarily large secondary
budgets.

\subsubsection{Overall conclusions from the experiment}

Within this focused design, the experiment supports the paper's
proof-of-principle claim.  Transferring only \(0.75\%\) of the total
error budget from KS to variance and skewness has a very small effect
on location power but produces large gains against several scale,
heavy-tailed, and asymmetric alternatives.  The chain is useful here
because it measures both sides of that exchange: the reduced KS
contribution and the additional first-rejection probability supplied
by each secondary diagnostic.

The result is not a uniform dominance statement.  A specialized raw
test, or a two-stage chain that assigns the entire secondary budget to
the relevant feature, can remain stronger for a particular
alternative family.  The three-stage chain is instead a broader
omnibus compromise that protects the primary procedure while adding
limited feature-specific safeguards.

Finally, total power and stagewise attribution are different.
Reversing variance and skewness barely changes total power at the
selected allocation, but it can materially change which statistic
receives first-rejection credit.  The decomposition should therefore
be interpreted as a measurement of incremental contribution under a
prespecified order, not as an order-free ranking of the statistics.

\section{Conclusion}

The paper began from a simple observation: an established omnibus
test may fail to use an obvious feature-specific departure
efficiently.  Replacing the primary test is not always necessary.  A
small part of its error budget can instead be reserved for natural
secondary diagnostics.  The main conclusion of the paper is that this
small transfer can substantially broaden power while preserving
almost all of the primary test's original power.

The chain construction is the boundary-selection and accounting
mechanism that makes this augmentation controlled and measurable.
Unconditional budgets specify how much null rejection probability is
retained by the primary statistic and how much is assigned to each
secondary statistic.  Conditional calibration implements those
budgets without inflating the population size.  Once the boundaries
are fixed, the final rule is the corresponding rectangular
intersection.  The ordered first-rejection decomposition measures the
amount of power supplied by the primary test and by each successive
diagnostic.

The normality experiment makes the trade-off concrete.  Only
\(0.75\%\) of the total \(0.05\) error budget was transferred from KS
to variance and skewness.  The resulting loss against normal location
alternatives was very small, while power increased sharply against
several scale, heavy-tailed, and asymmetric alternatives.  Larger
secondary allocations produced further gains for the selected
nonlocation alternatives, but at an increasing cost to the primary
test.  Thus the allocation is a scientific trade-off, not a universal
optimization problem.  The displayed allocation is an illustration,
not an estimate of a universally optimal budget.

The supporting theory establishes that the reusable quantile-based
Monte Carlo calibration converges to the intended population chain.
When an exact finite-\(m\) decision is required for one observation,
the pooled-rank construction provides a separate exchangeability-based
version with exact randomized null size.  These results justify two
different implementations; the reusable null-calibrated chain is the
one studied in the power experiment.

Ordering is primarily an interpretive choice.  The first-rejection
components describe incremental contributions under the chosen order
and should not be read as intrinsic importance measures.  In the
experiment, reversing variance and skewness changed attribution much
more than total power.  The order-invariance proposition gives a
sufficient condition under which the population rejection region is
exactly unchanged, but numerical similarity alone does not establish
that condition.  The weighted minimum-\(p\) comparison provides an
order-free matched-allocation baseline; its hyperrectangle
representation is simply an equivalent description of the same rule.

The numerical study is intentionally focused: it uses one simple null,
one sample size, and secondary statistics with immediate
interpretations, and it is not an exhaustive comparison with
normality tests.  The broader principle is not tied to variance and
skewness.  In other applications, the added diagnostics should encode
the simple departures that matter scientifically and that the primary
test may underweight.  They, their order, and their budgets must be
prespecified or selected by a procedure included in the null
calibration.  Extensions to composite nulls, two-sample problems, and
stages containing several statistics remain useful directions, as do
principled methods for selecting secondary statistics and their small
unconditional budgets.

\subsection*{Declaration of AI-Assisted Content and Workflow}

During the preparation of this manuscript and its supporting materials, the author used a neural-network-based language model (ChatGPT) as an assistive tool. The specific contributions and the author’s supervisory role are as follows:
\begin{enumerate}
\item Language and readability: the assistant was used to refine grammar, improve sentence structure, and enhance the overall clarity of the written text. All technical terminology, tone, and final editorial choices remain the author's own.
\item Code and data processing: the assistant provided support for writing boilerplate iteration loops, debugging minor syntax issues in the R code, and aggregating the large-scale simulation output matrices into the structured LaTeX tables presented in this paper. The core statistical algorithms—including the C++ implementation of the Kolmogorov–Smirnov distance, quantile calibration, and the simulation engine—were written exclusively by the author.
\item Content generation and verification: the AI assistant proposed the addition of the pooled-rank finite-sample version (Section~\ref{sec:exact-pooled-rank}) and assisted in structuring the final rigorous presentation of Proposition \ref{sec:calibration-consistency}’s proof. The author independently evaluated the pooled-rank concept, developed the underlying exchangeability argument, verified every step of the mathematical derivations, and made the final editorial decision to include these sections. The AI did not generate the underlying mathematical strategy.
\end{enumerate}

The paper’s central idea, the proposed augmentation methodology, the experimental design, the interpretation of the power results, and the overall conclusions are exclusively the work of the author. The author assumes full and sole responsibility for the scientific content, mathematical correctness, and final conclusions of this work.

\clearpage
\bibliographystyle{plainurl}
\bibliography{literature_sequential2}{}

\appendix

\section{Proofs of the theoretical results}
\label{appendix_proofs}

\subsection{Proof of Proposition~\ref{prop:order-invariance}}
\label{appendix_proof_order_invariance}

\begin{proof}[Proof of Proposition~\ref{prop:order-invariance}]
Fix a secondary statistic \(T_j\) and a set
\(S\subseteq\{1,\ldots,L\}\setminus\{j\}\).  Conditional null
independence gives
\begin{align}
&Q_0\!\left(
T_j(X)\notin B_j
\,\middle|\,
X\in\mathcal A_0,
\ T_\ell(X)\in B_\ell\text{ for every }\ell\in S
\right)
\nonumber\\
&\qquad=
Q_0\!\left(
T_j(X)\notin B_j
\,\middle|\,
X\in\mathcal A_0
\right)
=
\eta_j,
\label{eq:order-invariance-proof}
\end{align}
whenever the conditioning event has positive probability.  Thus the
region \(B_j\) calibrated in~\eqref{eq:secondary-given-primary}
continues to satisfy its prescribed conditional level after any
subset of the other secondary statistics has been accepted.  By the
assumed uniqueness of the calibration rule, the sequential procedure
therefore selects this same \(B_j\) at whatever position \(T_j\)
occupies.

It follows that the final survivor region is the intersection in
\eqref{eq:order-invariant-regions}.  Since set intersection is
commutative, this region and its complement are identical for all
permutations.  Equation~\eqref{eq:order-invariant-power} follows
because identical rejection regions have identical probability under
every \(Q\).  Finally, applying conditional independence to the
acceptance events gives
\[
Q_0(\mathcal A_L^{\sigma})
=
Q_0(\mathcal A_0)
\prod_{j=1}^L
Q_0\!\left(T_j(X)\in B_j\mid X\in\mathcal A_0\right)
=
(1-\gamma_0)\prod_{j=1}^L(1-\eta_j),
\]
which proves~\eqref{eq:order-invariant-size}.  Although the union of
the stagewise rejection events is unchanged, the event that a
particular statistic is the first to reject changes when the order is
changed; hence its ordered first-rejection contribution can change.
\end{proof}

\subsection{Proof of Proposition~\ref{prop:minp-hyperrectangle}}
\label{appendix_proof_minp}

\begin{proof}[Proof of Proposition~\ref{prop:minp-hyperrectangle}]
By definition of a minimum,
\[
M_w(x)\leq c
\quad\Longleftrightarrow\quad
\frac{p_j(x)}{w_j}\leq c
\text{ for at least one }j\in J_w
\quad\Longleftrightarrow\quad
p_j(x)\leq cw_j
\text{ for at least one }j\in J_w.
\]
The complement is precisely the intersection of the coordinatewise
acceptance conditions in~\eqref{eq:pvalue-hyperrectangle}, proving the
first claim.  For the converse, the choices
in~\eqref{eq:hyperrectangle-to-minp} satisfy
\(w_j\geq0\), \(\sum_jw_j=1\), and \(cw_j=u_j\).  Substitution into
\eqref{eq:minp-hyperrectangle-equivalence} gives exactly the stated
hyperrectangle rule.
\end{proof}

\subsection{Proof of Proposition~\ref{prop:calibration-consistency}}
\label{appendix_proof_consistency}

\begin{proof}[Proof of Proposition~\ref{prop:calibration-consistency}]
Let
\[
Z_i
=
\bigl(T_0(X_i^*),\ldots,T_L(X_i^*)\bigr),
\qquad i=1,\ldots,m,
\]
and let \(Z\) denote the corresponding random vector under \(Q_0\).
The class of axis-aligned rectangles in \(\mathbb R^{L+1}\), allowing
unbounded endpoints, is a Vapnik--Chervonenkis class.  Its pullback
under \(x\mapsto(T_0(x),\ldots,T_L(x))\) is therefore
Glivenko--Cantelli.  On an event of probability one,
\begin{equation}
\label{eq:rectangle-gc}
\sup_{C\in\mathcal C}
\left|\mathbb Q_m(C)-Q_0(C)\right|
\longrightarrow 0,
\end{equation}
where \(\mathcal C\) contains all sets obtained by imposing interval
restrictions on any subset of the statistics.  The uniformity in
\eqref{eq:rectangle-gc} is important because the restrictions defining
\(\widehat{\mathcal A}_{r,m}\) are themselves random.

We proceed by induction over \(r\).  At stage zero, the ordinary
empirical distribution function of \(T_0(X)\) converges uniformly to
its population distribution function.  Consistency of empirical
quantiles at unique continuity points therefore gives convergence of
every endpoint of \(\widehat B_{0,m}\) to the corresponding endpoint
of \(B_0\).  The continuity assumption also implies
\[
Q_0\!\left(
\widehat{\mathcal A}_{0,m}
\mathbin{\triangle}
\mathcal A_0
\right)
\longrightarrow0
\]

Suppose now that the endpoint and symmetric-difference conclusions
hold through stage \(r-1\).  The empirical conditional distribution
function used at stage \(r\) can be written as
\begin{equation}
\label{eq:conditional-empirical-cdf}
\widehat F_{r,m}(t)
=
\frac{
\mathbb Q_m\!\left(
\widehat{\mathcal A}_{r-1,m}
\cap\{x:T_r(x)\le t\}
\right)
}{
\mathbb Q_m(\widehat{\mathcal A}_{r-1,m})
}.
\end{equation}
For the numerator, uniformly in \(t\),
\begin{align*}
&\left|
\mathbb Q_m\!\left(
\widehat{\mathcal A}_{r-1,m}\cap\{T_r\le t\}
\right)
-
Q_0\!\left(
\mathcal A_{r-1}\cap\{T_r\le t\}
\right)
\right|\\
&\qquad\le
\sup_{C\in\mathcal C}|\mathbb Q_m(C)-Q_0(C)|
+
Q_0\!\left(
\widehat{\mathcal A}_{r-1,m}
\mathbin{\triangle}
\mathcal A_{r-1}
\right)
\longrightarrow0.
\end{align*}
The same argument without \(\{T_r\le t\}\) shows that the denominator
of~\eqref{eq:conditional-empirical-cdf} converges to
\(Q_0(\mathcal A_{r-1})>0\).  Hence
\begin{equation}
\label{eq:conditional-cdf-uniform}
\sup_{t\in\mathbb R}
|\widehat F_{r,m}(t)-F_r(t)|
\longrightarrow0.
\end{equation}
Consistency of the required empirical conditional quantiles now gives
endpoint convergence at stage \(r\).  Continuity at those endpoints,
together with the induction hypothesis, yields
\[
Q_0\!\left(
\widehat{\mathcal A}_{r,m}
\mathbin{\triangle}
\mathcal A_r
\right)
\longrightarrow0.
\]
This completes the induction.

Combining the last display with~\eqref{eq:rectangle-gc} gives
\[
\mathbb Q_m(\widehat{\mathcal A}_{r,m})
\longrightarrow Q_0(\mathcal A_r).
\]
Because the population quantile construction has conditional
acceptance probability \(1-\gamma_r\), recursion gives
\[
Q_0(\mathcal A_r)
=
Q_0(\mathcal A_{r-1})(1-\gamma_r)
=
\prod_{s=0}^r(1-\gamma_s),
\]
proving~\eqref{eq:survivor-consistency}.  Moreover,
\[
\widehat\gamma_{r,m}
=
1-
\frac{
\mathbb Q_m(\widehat{\mathcal A}_{r,m})
}{
\mathbb Q_m(\widehat{\mathcal A}_{r-1,m})
}
\longrightarrow
1-
\frac{Q_0(\mathcal A_r)}{Q_0(\mathcal A_{r-1})}
=
\gamma_r,
\]
which proves~\eqref{eq:gammahat-consistency}.  Replacing
\(\mathbb Q_m\) by \(Q_0\) in the same ratio and using the
symmetric-difference convergence proves
\eqref{eq:gammatilde-consistency}.  Finally,
\[
Q_0(\widehat{\mathcal R}^{(L)}_m)
=
1-Q_0(\widehat{\mathcal A}_{L,m})
\longrightarrow
1-Q_0(\mathcal A_L)
=
1-\prod_{r=0}^L(1-\gamma_r),
\]
which is~\eqref{eq:size-consistency}.
\end{proof}

\subsection{Proof of Proposition~\ref{prop:pooled-rank-exact}}
\label{appendix_proof_pooled_rank}

\begin{proof}[Proof of Proposition~\ref{prop:pooled-rank-exact}]
Condition on the integer vector
\((K_{0,M},\ldots,K_{L,M})\).  Under \(H_0\), the pooled sample rows,
together with their iid tie-breaking priorities, are exchangeable.
Permutation equivariance of the complete sequential rule therefore
implies that the indicators
\[
\mathbf 1\{i\in\widetilde D_{r,M}\},
\qquad i=0,\ldots,m,
\]
have equal conditional expectations.  Since exactly \(K_{r,M}\)
indices receive the stage-\(r\) first-rejection label,
\[
\sum_{i=0}^m
\mathbf 1\{i\in\widetilde D_{r,M}\}
=K_{r,M}.
\]
Consequently,
\[
\Pr_0\!\left(
0\in\widetilde D_{r,M}
\mid K_{0,M},\ldots,K_{L,M}
\right)
=
\frac{K_{r,M}}{M}.
\]
Taking expectations and using~\eqref{eq:pooled-count-expectation}
gives
\[
\Pr_0(0\in\widetilde D_{r,M})
=
\frac{E(K_{r,M})}{M}
=b_r,
\]
which proves~\eqref{eq:pooled-exact-stage}.  The sets
\(\widetilde D_{0,M},\ldots,\widetilde D_{L,M}\) are disjoint by the
survivor recursion, so~\eqref{eq:pooled-budget-sum} gives
\[
\Pr_0\!\left(
0\in\bigcup_{r=0}^L\widetilde D_{r,M}
\right)
=
\sum_{r=0}^L b_r
=\alpha.
\]
Finally,
\[
\Pr_0(0\in\widetilde I_{r-1,M})
=
1-\sum_{s=0}^{r-1}b_s
=
\prod_{s=0}^{r-1}(1-\gamma_s).
\]
Because
\(\{0\in\widetilde D_{r,M}\}\subseteq
\{0\in\widetilde I_{r-1,M}\}\),
\eqref{eq:unconditional-stage-budget} implies
\[
\Pr_0\!\left(
0\in\widetilde D_{r,M}
\mid
0\in\widetilde I_{r-1,M}
\right)
=
\frac{b_r}
{\prod_{s=0}^{r-1}(1-\gamma_s)}
=
\gamma_r.
\]
\end{proof}

\section{Finite-\texorpdfstring{\(m\)}{m} uncertainty of calibrated boundaries}
\label{appendix_boundary_uncertainty}

Proposition~\ref{prop:calibration-consistency} establishes convergence
of the estimated boundaries but does not describe their uncertainty at
a fixed calibration size \(m\).  There are two distinct finite-sample
effects.  First, a stage-\(r\) quantile is estimated from only
\[
N_{r,m}:=|I_{r-1,m}|
\]
surviving calibration observations, with
\[
N_{r,m}\approx
m\prod_{s=0}^{r-1}(1-\gamma_s).
\]
Second, for \(r\geq1\), the survivor set itself depends on all
previously estimated boundaries.  Thus uncertainty in an early
boundary propagates to the conditional distribution used at every
later stage.

We first isolate the ordinary order-statistic component.  Let
\(C\subseteq\mathcal X\) be a fixed conditioning region with
\(Q_0(C)>0\), and define
\begin{equation}
\label{eq:fixed-conditional-cdf}
F_{r,C}(t)
:=
Q_0\!\left(T_r(X)\leq t\mid X\in C\right),
\qquad
q_{r,p}(C):=F_{r,C}^{-1}(p).
\end{equation}
Suppose that \(Z_1,\ldots,Z_N\) are iid from \(F_{r,C}\), and write
\(Z_{(1)}\leq\cdots\leq Z_{(N)}\) for their order statistics, with
the conventions \(Z_{(0)}=-\infty\) and
\(Z_{(N+1)}=+\infty\).

\begin{proposition}[Finite-sample interval for a fixed conditional boundary]
\label{prop:boundary-binomial-interval}
Let \(p,\delta\in(0,1)\), and assume that \(F_{r,C}\) is continuous
at its unique \(p\)-quantile.
Let \(B_N\sim\operatorname{Binomial}(N,p)\), and choose integers
\(0\leq\ell\leq u\leq N+1\) satisfying
\begin{equation}
\label{eq:boundary-binomial-coverage}
\Pr\!\left(\ell\leq B_N\leq u-1\right)\geq1-\delta.
\end{equation}
Then
\begin{equation}
\label{eq:boundary-orderstat-interval}
\Pr\!\left(
q_{r,p}(C)\in[Z_{(\ell)},Z_{(u)}]
\right)\geq1-\delta.
\end{equation}
\end{proposition}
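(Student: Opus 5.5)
The plan is the classical distribution-free order-statistic confidence interval for a quantile. The key device is the count
\[
W_N:=\#\{i\le N: Z_i< q\},\qquad q:=q_{r,p}(C).
\]
The whole proof will reduce the event in~\eqref{eq:boundary-orderstat-interval} to an event about \(W_N\), and then identify the law of \(W_N\) with that of \(B_N\).

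First, I establish the distribution of \(W_N\). Since \(F_{r,C}\) is continuous at \(q\), we have \(\Pr(Z_i<q)=F_{r,C}(q^-)=F_{r,C}(q)\). Because \(q\) is the unique \(p\)-quantile at a continuity point, \(F_{r,C}(q)=p\). The \(Z_i\) are iid, so \(W_N\sim\operatorname{Binomial}(N,p)\), which is the law of \(B_N\). Continuity at \(q\) also gives \(\Pr(Z_i=q)=0\). Hence almost surely no order statistic equals \(q\), and \(W_N=\#\{i:Z_i\le q\}\).

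Second, I translate the quantile event into a statement about \(W_N\), using the conventions \(Z_{(0)}=-\infty\) and \(Z_{(N+1)}=+\infty\). On the probability-one event of no ties with \(q\), the following two equivalences hold:
\[
Z_{(\ell)}\le q\iff W_N\ge \ell,
\]
because at least \(\ell\) observations lie at or below \(q\); and
\[
q\le Z_{(u)}\iff W_N\le u-1,
\]
because fewer than \(u\) observations lie strictly below \(q\). The boundary cases \(\ell=0\) and \(u=N+1\) hold trivially on both sides. Combining them,
\[
\{q\in[Z_{(\ell)},Z_{(u)}]\}\supseteq\{\ell\le W_N\le u-1\}
\]
up to a null set. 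Therefore
\[
\Pr\bigl(q\in[Z_{(\ell)},Z_{(u)}]\bigr)\ge\Pr(\ell\le B_N\le u-1)\ge 1-\delta
\]
by~\eqref{eq:boundary-binomial-coverage}.

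The only delicate step is the first: justifying \(\Pr(Z_i<q)=p\) exactly. This needs both continuity at \(q\), which rules out an atom, and the identification \(F_{r,C}(q)=p\), which uniqueness of the quantile together with continuity provides. If I wanted to avoid relying on the exact equality, I would note that \(F(q^-)\le p\le F(q)\) and use monotone coupling of binomials to obtain the inequality directly. Under the stated assumptions, however, equality holds, so this is not needed. I would also remark that the argument treats \(C\) and \(N\) as fixed, so it does not capture the propagation of uncertainty from earlier random filters.
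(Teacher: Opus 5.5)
Your proof is correct and follows the paper's argument: continuity at the unique quantile gives \(F_{r,C}(q)=p\), so the number of observations strictly below \(q\) is \(\operatorname{Binomial}(N,p)\), and \(\{Z_{(\ell)}\le q\le Z_{(u)}\}\) coincides with \(\{\ell\le W_N\le u-1\}\) up to a null event. Your version only adds detail the paper leaves implicit, namely the two one-sided equivalences and the boundary conventions \(Z_{(0)}=-\infty\) and \(Z_{(N+1)}=+\infty\).
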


\begin{proof}
By continuity and uniqueness,
\(F_{r,C}\bigl(q_{r,p}(C)\bigr)=p\).  Hence the number of observations
strictly below \(q_{r,p}(C)\) has the
\(\operatorname{Binomial}(N,p)\) distribution.  The event
\[
Z_{(\ell)}\leq q_{r,p}(C)\leq Z_{(u)}
\]
is equivalent, up to a null event, to having at least \(\ell\) and
at most \(u-1\) observations below the quantile.  The result follows
from~\eqref{eq:boundary-binomial-coverage}.
\end{proof}

The integers \(\ell\) and \(u\) may be chosen as a central or shortest
binomial acceptance interval.  This construction is
distribution-free and does not require estimation of the density at
the boundary.  For the KS stage one applies it at
\(p=1-\gamma_0\).  For an equal-tailed stage \(r\), it is applied
separately at \(p=\gamma_r/2\) and
\(p=1-\gamma_r/2\).  If \(J\) finite endpoints from fixed
conditional distributions are to be covered simultaneously, using
\(\delta/J\) in place of \(\delta\) for every endpoint gives coverage
at least \(1-\delta\) by the Bonferroni inequality.

The qualification that \(C\) be fixed is important.  In the
implemented sequential calibration,
\(C=\widehat{\mathcal A}_{r-1,m}\) is estimated from the same null
bank that supplies the surviving values.  Therefore
Proposition~\ref{prop:boundary-binomial-interval}, applied naively
with \(N=N_{r,m}\), does not give an unconditional confidence interval
for the population boundary \(q_{r,p}(\mathcal A_{r-1})\).  It
describes the order-statistic component of uncertainty while holding
the upstream filter fixed.  Exact conditional use of the proposition
is possible with an independent stage-specific calibration bank after
the preceding region has been fixed, but such an interval still
targets the boundary conditional on that estimated preceding region
and does not include its upstream estimation error.

For the actual same-bank algorithm, the most direct diagnostic is to
repeat the \emph{entire} sequential calibration independently.  Let
\[
\widehat{\boldsymbol b}^{(z)}_m
=
\left(
\widehat b^{(z)}_{1,m},\ldots,
\widehat b^{(z)}_{J,m}
\right),
\qquad z=1,\ldots,t,
\]
collect all finite lower and upper endpoints produced in repetition
\(z\).  For endpoint \(j\), report its empirical mean, its
calibration-to-calibration standard deviation
\begin{equation}
\label{eq:boundary-repetition-sd}
s_{j,m}
=
\left\{
\frac{1}{t-1}
\sum_{z=1}^t
\left(
\widehat b^{(z)}_{j,m}
-\overline b_{j,m}
\right)^2
\right\}^{1/2},
\end{equation}
and, when useful, the central empirical range
\begin{equation}
\label{eq:boundary-repetition-range}
\left[
\widehat G^{-1}_{j,m}(0.025),
\widehat G^{-1}_{j,m}(0.975)
\right],
\end{equation}
where \(\widehat G_{j,m}\) is the empirical distribution of the
\(t\) replicated boundaries.  These quantities describe the
finite-\(m\) distribution induced by the complete calibration
algorithm and therefore include propagation through all earlier
stages.  The range in~\eqref{eq:boundary-repetition-range} is a
descriptive range of boundary estimates, not an exact confidence
interval for the population boundary.

For this purpose the relevant uncertainty summary is \(s_{j,m}\), not
\(s_{j,m}/\sqrt t\).  The latter measures only the numerical
uncertainty in the estimated \emph{mean} boundary across repetitions
and would understate the variation experienced by a test calibrated
from one finite null bank.  With \(t=100\), standard deviations are
reasonably estimable, whereas the endpoints of a \(95\%\) empirical
range depend on only a few extreme repetitions and should be
interpreted descriptively.  More outer repetitions are appropriate if
stable tail summaries of the boundary distribution are themselves an
inferential target.

\section{Allocation-sensitivity summaries}
\label{appendix_sensitivity}

Table~\ref{tab:sensitivity-rho-full} gives family-average power over
the full \(\rho\) grid under the equal secondary split.  These are
descriptive, unweighted averages over the alternatives included in
each family; in particular, the last two columns do not define a
population-weighted optimal allocation.

\begin{table}[H]
\centering
\small
\setlength{\tabcolsep}{3.2pt}
\begin{threeparttable}
\caption{Family-average power over the full \(\rho\) grid for
\(\omega=0.5\), based on \(m=h=10^6\) and \(t=100\).}
\label{tab:sensitivity-rho-full}
\begin{tabular}{rrrrrrrr}
\toprule
\(\rho\) & Location & Scale & Cauchy & Gamma & Student & Chain, all & Min-\(p\), all \\
\midrule
0.0000 & 0.369113 & 0.459886 & 0.205612 & 0.221051 & 0.083695 & 0.275926 & 0.275926 \\
0.0025 & 0.368897 & 0.725117 & 0.662347 & 0.284837 & 0.383951 & 0.494446 & 0.482419 \\
0.0050 & 0.368688 & 0.741171 & 0.679014 & 0.295637 & 0.395792 & 0.506036 & 0.496918 \\
0.0075 & 0.368473 & 0.749727 & 0.689478 & 0.302989 & 0.403284 & 0.513031 & 0.505347 \\
0.0100 & 0.368261 & 0.755638 & 0.697286 & 0.308680 & 0.409006 & 0.518188 & 0.511011 \\
0.0200 & 0.367394 & 0.769262 & 0.716854 & 0.323557 & 0.423905 & 0.530995 & 0.525231 \\
0.0500 & 0.364757 & 0.787438 & 0.744581 & 0.348803 & 0.446928 & 0.549902 & 0.545063 \\
0.1000 & 0.360215 & 0.801984 & 0.766549 & 0.373022 & 0.467251 & 0.565858 & 0.561784 \\
0.1500 & 0.355441 & 0.810816 & 0.779664 & 0.389873 & 0.480375 & 0.575810 & 0.572395 \\
0.2000 & 0.350449 & 0.817186 & 0.789061 & 0.403307 & 0.490248 & 0.583088 & 0.580231 \\
0.3000 & 0.339506 & 0.826115 & 0.802328 & 0.425012 & 0.504862 & 0.593451 & 0.591581 \\
0.4000 & 0.327074 & 0.832386 & 0.811801 & 0.443007 & 0.515730 & 0.600711 & 0.599705 \\
0.5000 & 0.312708 & 0.837048 & 0.819165 & 0.459362 & 0.524300 & 0.606079 & 0.605675 \\
\bottomrule
\end{tabular}
\begin{tablenotes}[flushleft]\footnotesize
\item Location averages the three normal location alternatives; scale
averages the five normal scale alternatives; the other columns average
the three Cauchy, four standardized gamma, and four Student
alternatives.  ``All'' is the unweighted mean over all 19
alternatives.
\end{tablenotes}
\end{threeparttable}
\end{table}

Table~\ref{tab:sensitivity-omega} fixes the total secondary budget at
\(\rho=0.0075\) and varies only its division between variance and
skewness.  The endpoint \(\omega=1\) omits skewness, whereas
\(\omega=0\) omits variance.

\begin{table}[H]
\centering
\small
\setlength{\tabcolsep}{3.2pt}
\begin{threeparttable}
\caption{Sensitivity to \(\omega\) at \(\rho=0.0075\), based on
\(m=h=10^6\) and \(t=100\).}
\label{tab:sensitivity-omega}
\begin{tabular}{rrrrrrrr}
\toprule
\(\omega\) & Location & Scale & Cauchy & Gamma & Student & Chain, all & Min-\(p\), all \\
\midrule
1.00 & 0.368487 & 0.763748 & 0.669929 & 0.261545 & 0.414516 & 0.507275 & 0.503278 \\
0.75 & 0.368480 & 0.758014 & 0.691785 & 0.297394 & 0.409937 & 0.515799 & 0.509216 \\
0.50 & 0.368473 & 0.749727 & 0.689478 & 0.302989 & 0.403284 & 0.513031 & 0.505347 \\
0.25 & 0.368467 & 0.734752 & 0.681136 & 0.305126 & 0.392547 & 0.505961 & 0.495514 \\
0.00 & 0.368458 & 0.458433 & 0.429507 & 0.291529 & 0.175449 & 0.344946 & 0.339507 \\
\bottomrule
\end{tabular}
\begin{tablenotes}[flushleft]\footnotesize
\item The family averages are defined as in
Table~\ref{tab:sensitivity-rho-full}.  For the chain, \(\omega\) is
the fraction of the secondary first-rejection budget assigned to
variance.  The minimum-\(p\) weights use the same split.
\end{tablenotes}
\end{threeparttable}
\end{table}

Table~\ref{tab:sensitivity-size} reports the minimum and maximum mean
null rejection probabilities over the 13 values of \(\rho\), for
each fixed \(\omega\).  It confirms that the power patterns are not
accompanied by a practically important size change.

\begin{table}[H]
\centering
\small
\begin{threeparttable}
\caption{Ranges of mean null rejection probabilities over the
\(\rho\) grid, based on \(h=10^6\) and \(t=100\).}
\label{tab:sensitivity-size}
\begin{tabular}{rccc}
\toprule
\(\omega\) & KS--Var--Skew & KS--Skew--Var & Weighted min-\(p\) \\
\midrule
1.00 & [0.049975, 0.050003] & [0.049975, 0.050003] & [0.049976, 0.049997] \\
0.75 & [0.049985, 0.050004] & [0.049987, 0.050004] & [0.049984, 0.050002] \\
0.50 & [0.049995, 0.050023] & [0.049995, 0.050023] & [0.049990, 0.050007] \\
0.25 & [0.049995, 0.050036] & [0.049995, 0.050034] & [0.049990, 0.050035] \\
0.00 & [0.049994, 0.050035] & [0.049994, 0.050035] & [0.049990, 0.050031] \\
\bottomrule
\end{tabular}
\begin{tablenotes}[flushleft]\footnotesize
\item Each interval is the range, over the 13 values of \(\rho\), of
the mean null rejection probability across 100 repetitions.  The
Monte Carlo standard errors of the underlying means range from
\(2.83\times10^{-5}\) to \(3.39\times10^{-5}\).
\end{tablenotes}
\end{threeparttable}
\end{table}

\clearpage
\section{Fast vectorized Kolmogorov--Smirnov implementation}
\label{appendix_a}

\begin{lstlisting}
library(Rcpp)
cppFunction("
NumericVector KSDistances(
    NumericMatrix samples, // matrix with samples in rows
    Function F_0 // cdf for H_0
) {
    int m = samples.nrow(); // infer number of samples from samples matrix
    int n = samples.ncol(); // infer sample size from samples matrix

    NumericVector out(m); // allocate memory for the final result

    for (int i = 0; i < m; i++) { // loop over rows of samples matrix
        NumericVector sample = samples(i, _); // get specific sample
        sample.sort(); // sort it in place
        NumericVector F_0_values = F_0(sample); // calculate F_0 for this sample
        double KS_dist_tmp = 0.0; // initiate KS distance for sample
        for (int j = 0; j < n; j++) { // loop over sample order statistics to calc dists
            double t1_KS = double(j + 1) / n - F_0_values[j];
            double t2_KS = F_0_values[j] - double(j) / n;
            double t3_KS = t1_KS > t2_KS ? t1_KS : t2_KS;
            if (t3_KS > KS_dist_tmp) KS_dist_tmp = t3_KS;
        }

        out[i] = KS_dist_tmp; // 1st res col is for KS
    }

    return out;
}
")
\end{lstlisting}

\section{Experiment code}
\label{appendix_b}

\begin{lstlisting}
library(e1071)

h <- 1000000

null.distribution.generation.function <- function(n) rnorm(n)
F_0 <- function(x) pnorm(x)

normal.mean.vector  <- c(0.1, 0.5, 1)
normal.sd.vector    <- c(0.1, 0.2, 1.5, 2, 3)
cauchy.scale.vector <- c(0.1, 0.5, 1)
gamma.par.vector    <- c(0.1, 0.5, 1, 10)
t.df.vector         <- c(1, 2, 3, 4)

alternative.distribution.generation.functions <- c(
    lapply(
        setNames(
            normal.mean.vector,
            sapply(normal.mean.vector, function(s) paste0("mean=", s))
        ),
        function(mean) function(n) rnorm(n, mean, 1)
    ),
    lapply(
        setNames(
            normal.sd.vector,
            sapply(normal.sd.vector, function(s) paste0("sd=", s))
        ),
        function(sd) function(n) rnorm(n, 0, sd)
    ),
    lapply(
        setNames(
            cauchy.scale.vector,
            sapply(cauchy.scale.vector, function(s) paste0("c.s=", s))
        ),
        function(scale) function(n) rcauchy(n, 0, scale)
    ),
    lapply(
        setNames(
            gamma.par.vector,
            sapply(gamma.par.vector, function(s) paste0("g=", s))
        ),
        function(par) function(n) (rgamma(n, shape = par, rate = 1) - par) / sqrt(par)
    ),
    lapply(
        setNames(
            t.df.vector,
            sapply(t.df.vector, function(s) paste0("t.df=", s))
        ),
        function(df) function(n) rt(n, df = df)
    )
)

n <- 10
m <- 1000000
t <- 100

alpha <- 0.05

# rho is the fraction of the total alpha budget assigned to the secondary
# statistics. omega is the fraction of that secondary budget assigned to var.
rho.grid <- c(0, 0.0025, 0.005, 0.0075, 0.01, 0.02, 0.05, 0.10, 0.15, 0.20, 0.3, 0.4, 0.5)
omega.grid <- c(1, 0.75, 0.5, 0.25, 0)


# -----------------------------------------------------------------------------
# Simulation and p-value helpers
# -----------------------------------------------------------------------------

simulate.statistics <- function(distribution.generation.function, number.of.simulations)
{
    samples <- distribution.generation.function(n * number.of.simulations)
    dim(samples) <- c(number.of.simulations, n)

    cbind(
        KS   = KSDistances(samples, F_0),
        var  = apply(samples, 1, var),
        skew = apply(samples, 1, e1071::skewness)
    )
}


p.value.dist.right <- function(Tvals)
{
    rank(-Tvals, ties.method = "max") / length(Tvals)
}


p.value.dist.two.sided.tails <- function(Tvals)
{
    number.of.values <- length(Tvals)

    p.right <- rank(-Tvals, ties.method = "max") / number.of.values
    p.left  <- rank( Tvals, ties.method = "max") / number.of.values

    pmin(1, 2 * pmin(p.left, p.right))
}


# Tnull.sorted must already be sorted. Sorting once per outer repetition avoids
# sorting the same million null statistics for every rho, omega and alternative.
p.value.right.fast <- function(Tnull.sorted, Tobs)
{
    number.of.null.values <- length(Tnull.sorted)

    # Number of null values strictly less than each Tobs
    number.less <- findInterval(Tobs, Tnull.sorted, left.open = TRUE)

    # Number greater than or equal to each Tobs
    number.greater.equal <- number.of.null.values - number.less

    (1 + number.greater.equal) / (number.of.null.values + 1)
}


p.value.two.sided.tails.fast <- function(Tnull.sorted, Tobs)
{
    number.of.null.values <- length(Tnull.sorted)

    # Number of null values <= Tobs
    number.less.equal <- findInterval(Tobs, Tnull.sorted)

    # Number of null values < Tobs
    number.less <- findInterval(Tobs, Tnull.sorted, left.open = TRUE)

    p.left  <- (1 + number.less.equal) / (number.of.null.values + 1)
    p.right <- (1 + number.of.null.values - number.less) /
               (number.of.null.values + 1)

    pmin(1, 2 * pmin(p.left, p.right))
}


add.component.pvalues <- function(stat.matrix, sorted.null.statistics)
{
    cbind(
        stat.matrix,
        KS.pvalue = p.value.right.fast(
            sorted.null.statistics$KS,
            stat.matrix[, "KS"]
        ),
        var.pvalue = p.value.two.sided.tails.fast(
            sorted.null.statistics$var,
            stat.matrix[, "var"]
        ),
        skew.pvalue = p.value.two.sided.tails.fast(
            sorted.null.statistics$skew,
            stat.matrix[, "skew"]
        )
    )
}


weighted.minimum.pvalue <- function(pvalue.matrix, weights)
{
    active.coordinates <- which(weights > 0)

    # A zero weight removes a coordinate. In particular, rho = 0 gives the
    # ordinary KS test instead of a finite empirical bound for var or skew.
    minimum.pvalue <- rep(Inf, nrow(pvalue.matrix))

    for (coordinate in active.coordinates) {
        minimum.pvalue <- pmin(
            minimum.pvalue,
            pvalue.matrix[, coordinate] / weights[coordinate]
        )
    }

    minimum.pvalue
}


two.sided.quantile.bounds <- function(values, conditional.level)
{
    # A zero stage budget means that the stage is inactive. Empirical minima
    # and maxima must not be used here because they would still cause rejection.
    if (conditional.level <= 0) {
        return(c(left = -Inf, right = Inf))
    }

    c(
        left  = unname(quantile(values, conditional.level / 2)),
        right = unname(quantile(values, 1 - conditional.level / 2))
    )
}


err.estimate <- function(check)
    mean(check)


# -----------------------------------------------------------------------------
# Evaluation of all tests for one already-simulated distribution
# -----------------------------------------------------------------------------

check.error.on.distribution <- function(stat.matrix, calibration)
{
    distance.distribution.KS.H1 <- stat.matrix[, "KS"]
    sample.vars.H1              <- stat.matrix[, "var"]
    sample.skews.H1             <- stat.matrix[, "skew"]

    KS.1stage.check  <- distance.distribution.KS.H1 > calibration$KS.bound.1stage
    KS.2stages.check <- distance.distribution.KS.H1 > calibration$KS.bound.2stages
    KS.3stages.check <- distance.distribution.KS.H1 > calibration$KS.bound.3stages

    var.rb.1stage.check <- sample.vars.H1 > calibration$var.rb.1stage
    var.lb.1stage.check <- sample.vars.H1 < calibration$var.lb.1stage

    rb.sample.var.check.q.stage.1.2stages <-
        sample.vars.H1 > calibration$rb.sample.var.q.stage.1.2stages
    lb.sample.var.check.q.stage.1.2stages <-
        sample.vars.H1 < calibration$lb.sample.var.q.stage.1.2stages
    rb.sample.var.check.q.stage.1.3stages <-
        sample.vars.H1 > calibration$rb.sample.var.q.stage.1.3stages
    lb.sample.var.check.q.stage.1.3stages <-
        sample.vars.H1 < calibration$lb.sample.var.q.stage.1.3stages
    rb.sample.var.check.q.stage.2.3stages <-
        sample.vars.H1 > calibration$rb.sample.var.q.stage.2.3stages
    lb.sample.var.check.q.stage.2.3stages <-
        sample.vars.H1 < calibration$lb.sample.var.q.stage.2.3stages

    skew.rb.1stage.check <- sample.skews.H1 > calibration$skew.rb.1stage
    skew.lb.1stage.check <- sample.skews.H1 < calibration$skew.lb.1stage

    rb.sample.skew.check.q.stage.1.2stages <-
        sample.skews.H1 > calibration$rb.sample.skew.q.stage.1.2stages
    lb.sample.skew.check.q.stage.1.2stages <-
        sample.skews.H1 < calibration$lb.sample.skew.q.stage.1.2stages
    rb.sample.skew.check.q.stage.1.3stages <-
        sample.skews.H1 > calibration$rb.sample.skew.q.stage.1.3stages
    lb.sample.skew.check.q.stage.1.3stages <-
        sample.skews.H1 < calibration$lb.sample.skew.q.stage.1.3stages
    rb.sample.skew.check.q.stage.2.3stages <-
        sample.skews.H1 > calibration$rb.sample.skew.q.stage.2.3stages
    lb.sample.skew.check.q.stage.2.3stages <-
        sample.skews.H1 < calibration$lb.sample.skew.q.stage.2.3stages

    minimum.pvalue <- weighted.minimum.pvalue(
        stat.matrix[, c("KS.pvalue", "var.pvalue", "skew.pvalue"), drop = FALSE],
        calibration$minimum.pvalue.weights
    )
    minimum.pvalue.reject <- minimum.pvalue < calibration$pvalue.min.bound


    # -------------------------------------------------------------------------
    # Raw secondary rejection indicators
    # -------------------------------------------------------------------------

    var.reject.stage.1.2stages <-
        rb.sample.var.check.q.stage.1.2stages |
        lb.sample.var.check.q.stage.1.2stages

    skew.reject.stage.1.2stages <-
        rb.sample.skew.check.q.stage.1.2stages |
        lb.sample.skew.check.q.stage.1.2stages

    var.reject.stage.1.3stages <-
        rb.sample.var.check.q.stage.1.3stages |
        lb.sample.var.check.q.stage.1.3stages

    skew.reject.stage.2.3stages <-
        rb.sample.skew.check.q.stage.2.3stages |
        lb.sample.skew.check.q.stage.2.3stages

    skew.reject.stage.1.3stages <-
        rb.sample.skew.check.q.stage.1.3stages |
        lb.sample.skew.check.q.stage.1.3stages

    var.reject.stage.2.3stages <-
        rb.sample.var.check.q.stage.2.3stages |
        lb.sample.var.check.q.stage.2.3stages


    # -------------------------------------------------------------------------
    # Decomposition: KS -> var
    # -------------------------------------------------------------------------

    R0.KS.var <- KS.2stages.check
    R1.KS.var <- !R0.KS.var & var.reject.stage.1.2stages
    reject.KS.var <- R0.KS.var | R1.KS.var


    # -------------------------------------------------------------------------
    # Decomposition: KS -> skew
    # -------------------------------------------------------------------------

    R0.KS.skew <- KS.2stages.check
    R1.KS.skew <- !R0.KS.skew & skew.reject.stage.1.2stages
    reject.KS.skew <- R0.KS.skew | R1.KS.skew


    # -------------------------------------------------------------------------
    # Decomposition: KS -> var -> skew
    # -------------------------------------------------------------------------

    R0.KS.var.skew <- KS.3stages.check
    R1.KS.var.skew <-
        !R0.KS.var.skew & var.reject.stage.1.3stages
    R2.KS.var.skew <-
        !R0.KS.var.skew &
        !var.reject.stage.1.3stages &
        skew.reject.stage.2.3stages
    reject.KS.var.skew <-
        R0.KS.var.skew | R1.KS.var.skew | R2.KS.var.skew


    # -------------------------------------------------------------------------
    # Decomposition: KS -> skew -> var
    # -------------------------------------------------------------------------

    R0.KS.skew.var <- KS.3stages.check
    R1.KS.skew.var <-
        !R0.KS.skew.var & skew.reject.stage.1.3stages
    R2.KS.skew.var <-
        !R0.KS.skew.var &
        !skew.reject.stage.1.3stages &
        var.reject.stage.2.3stages
    reject.KS.skew.var <-
        R0.KS.skew.var | R1.KS.skew.var | R2.KS.skew.var


    # -------------------------------------------------------------------------
    # Return total power and its decomposition
    # -------------------------------------------------------------------------

    c(
        KS.total = err.estimate(KS.1stage.check),
        var.total = err.estimate(var.lb.1stage.check | var.rb.1stage.check),
        skew.total = err.estimate(skew.lb.1stage.check | skew.rb.1stage.check),

        minimum.pvalue.total = err.estimate(minimum.pvalue.reject),

        KS.var.total = err.estimate(reject.KS.var),
        KS.var.KS.contribution = err.estimate(R0.KS.var),
        KS.var.var.contribution = err.estimate(R1.KS.var),

        KS.skew.total = err.estimate(reject.KS.skew),
        KS.skew.KS.contribution = err.estimate(R0.KS.skew),
        KS.skew.skew.contribution = err.estimate(R1.KS.skew),

        KS.var.skew.total = err.estimate(reject.KS.var.skew),
        KS.var.skew.KS.contribution = err.estimate(R0.KS.var.skew),
        KS.var.skew.var.contribution = err.estimate(R1.KS.var.skew),
        KS.var.skew.skew.contribution = err.estimate(R2.KS.var.skew),

        KS.skew.var.total = err.estimate(reject.KS.skew.var),
        KS.skew.var.KS.contribution = err.estimate(R0.KS.skew.var),
        KS.skew.var.skew.contribution = err.estimate(R1.KS.skew.var),
        KS.skew.var.var.contribution = err.estimate(R2.KS.skew.var)
    )
}


test.names <- c(
    "KS.total",
    "var.total",
    "skew.total",
    "minimum.pvalue.total",
    "KS.var.total",
    "KS.var.KS.contribution",
    "KS.var.var.contribution",
    "KS.skew.total",
    "KS.skew.KS.contribution",
    "KS.skew.skew.contribution",
    "KS.var.skew.total",
    "KS.var.skew.KS.contribution",
    "KS.var.skew.var.contribution",
    "KS.var.skew.skew.contribution",
    "KS.skew.var.total",
    "KS.skew.var.KS.contribution",
    "KS.skew.var.skew.contribution",
    "KS.skew.var.var.contribution"
)


rho.names <- paste0("rho=", format(rho.grid, trim = TRUE, scientific = FALSE))
omega.names <- paste0("omega=", format(omega.grid, trim = TRUE, scientific = FALSE))

power.estimates <- array(
    NA_real_,
    dim = c(
        length(test.names),
        length(alternative.distribution.generation.functions),
        length(rho.grid),
        length(omega.grid),
        t
    ),
    dimnames = list(
        test = test.names,
        alternative = names(alternative.distribution.generation.functions),
        rho = rho.names,
        omega = omega.names,
        repetition = seq_len(t)
    )
)

type1.error.estimates <- array(
    NA_real_,
    dim = c(length(test.names), length(rho.grid), length(omega.grid), t),
    dimnames = list(
        test = test.names,
        rho = rho.names,
        omega = omega.names,
        repetition = seq_len(t)
    )
)


# -----------------------------------------------------------------------------
# Repeated calibration and paired sensitivity study
# -----------------------------------------------------------------------------

for (z in seq_len(t))
{
    # One H0 calibration bank is shared across all rho and omega values in this
    # outer repetition.

    print("started")
    print(z)
    flush.console()
    
    stats.H0.cal <- simulate.statistics(
        null.distribution.generation.function,
        m
    )

    print("H0 samples simulation finished")
    flush.console()

    distance.distribution.KS <- stats.H0.cal[, "KS"]
    sample.vars              <- stats.H0.cal[, "var"]
    sample.skews             <- stats.H0.cal[, "skew"]

    print("H0 statitics computation finished")
    flush.console()

    sorted.null.statistics <- list(
        KS   = sort(distance.distribution.KS),
        var  = sort(sample.vars),
        skew = sort(sample.skews)
    )

    null.component.pvalues <- cbind(
        KS.pvalue   = p.value.dist.right(distance.distribution.KS),
        var.pvalue  = p.value.dist.two.sided.tails(sample.vars),
        skew.pvalue = p.value.dist.two.sided.tails(sample.skews)
    )

    # Ordinary one-stage bounds do not depend on rho or omega.
    KS.bound.1stage <- unname(quantile(distance.distribution.KS, 1 - alpha))

    var.bounds.1stage <- two.sided.quantile.bounds(sample.vars, alpha)
    var.lb.1stage <- var.bounds.1stage["left"]
    var.rb.1stage <- var.bounds.1stage["right"]

    skew.bounds.1stage <- two.sided.quantile.bounds(sample.skews, alpha)
    skew.lb.1stage <- skew.bounds.1stage["left"]
    skew.rb.1stage <- skew.bounds.1stage["right"]

    calibration.grid <- vector("list", length(rho.grid))

    for (rho.index in seq_along(rho.grid))
    {
        rho <- rho.grid[rho.index]
        calibration.grid[[rho.index]] <- vector("list", length(omega.grid))

        # Unconditional stage budgets. At rho = 0.0075 and omega = 0.5,
        # these are 0.049625, 0.0001875 and 0.0001875.
        stage.0.budget <- alpha * (1 - rho)
        secondary.stage.budget <- alpha * rho

        # Two-stage KS -> var and KS -> skew tests give the whole secondary
        # budget to their single secondary statistic.
        gamma.0.2stages <- stage.0.budget
        gamma.1.2stages <- secondary.stage.budget /
                           (1 - gamma.0.2stages)

        KS.bound.2stages <- unname(quantile(
            distance.distribution.KS,
            1 - gamma.0.2stages
        ))

        # The stage-0 budget is the same for the two- and three-stage tests.
        KS.bound.3stages <- KS.bound.2stages

        filter.stage.0.2stages <-
            distance.distribution.KS <= KS.bound.2stages
        filter.stage.0.3stages <- filter.stage.0.2stages

        sample.vars.filtered.by.stage.0.2stages <-
            sample.vars[filter.stage.0.2stages]
        var.bounds.stage.1.2stages <- two.sided.quantile.bounds(
            sample.vars.filtered.by.stage.0.2stages,
            gamma.1.2stages
        )

        sample.skews.filtered.by.stage.0.2stages <-
            sample.skews[filter.stage.0.2stages]
        skew.bounds.stage.1.2stages <- two.sided.quantile.bounds(
            sample.skews.filtered.by.stage.0.2stages,
            gamma.1.2stages
        )

        for (omega.index in seq_along(omega.grid))
        {
            omega <- omega.grid[omega.index]

            # var and skew budgets sum to the total secondary-stage budget.
            var.stage.budget  <- alpha * rho * omega
            skew.stage.budget <- alpha * rho * (1 - omega)

            # KS -> var -> skew. Each gamma is a conditional level obtained
            # from the corresponding unconditional stage budget.
            gamma.0.3stages <- stage.0.budget
            gamma.var.stage.1.3stages <-
                var.stage.budget / (1 - gamma.0.3stages)
            gamma.skew.stage.2.3stages <-
                skew.stage.budget /
                ((1 - gamma.0.3stages) *
                 (1 - gamma.var.stage.1.3stages))

            # KS -> skew -> var uses the same unconditional budgets, with the
            # order of the secondary stages reversed.
            gamma.skew.stage.1.3stages <-
                skew.stage.budget / (1 - gamma.0.3stages)
            gamma.var.stage.2.3stages <-
                var.stage.budget /
                ((1 - gamma.0.3stages) *
                 (1 - gamma.skew.stage.1.3stages))

            # Stage 1: KS -> var -> skew
            sample.vars.filtered.by.stage.0.3stages <-
                sample.vars[filter.stage.0.3stages]
            var.bounds.stage.1.3stages <- two.sided.quantile.bounds(
                sample.vars.filtered.by.stage.0.3stages,
                gamma.var.stage.1.3stages
            )

            filter.var.stage.1.3stages <-
                sample.vars >= var.bounds.stage.1.3stages["left"] &
                sample.vars <= var.bounds.stage.1.3stages["right"]

            # Stage 2: KS -> var -> skew
            sample.skews.filtered.by.stage.0.1.3stages <-
                sample.skews[
                    filter.stage.0.3stages &
                    filter.var.stage.1.3stages
                ]
            skew.bounds.stage.2.3stages <- two.sided.quantile.bounds(
                sample.skews.filtered.by.stage.0.1.3stages,
                gamma.skew.stage.2.3stages
            )

            # Stage 1: KS -> skew -> var
            sample.skews.filtered.by.stage.0.3stages <-
                sample.skews[filter.stage.0.3stages]
            skew.bounds.stage.1.3stages <- two.sided.quantile.bounds(
                sample.skews.filtered.by.stage.0.3stages,
                gamma.skew.stage.1.3stages
            )

            filter.skew.stage.1.3stages <-
                sample.skews >= skew.bounds.stage.1.3stages["left"] &
                sample.skews <= skew.bounds.stage.1.3stages["right"]

            # Stage 2: KS -> skew -> var
            sample.vars.filtered.by.stage.0.1.3stages <-
                sample.vars[
                    filter.stage.0.3stages &
                    filter.skew.stage.1.3stages
                ]
            var.bounds.stage.2.3stages <- two.sided.quantile.bounds(
                sample.vars.filtered.by.stage.0.1.3stages,
                gamma.var.stage.2.3stages
            )

            # Weighted minimum-p test. These weights agree with the same
            # allocation used by the sequential tests.
            minimum.pvalue.weights <- c(
                KS   = 1 - rho,
                var  = rho * omega,
                skew = rho * (1 - omega)
            )

            null.minimum.pvalues <- weighted.minimum.pvalue(
                null.component.pvalues,
                minimum.pvalue.weights
            )
            pvalue.min.bound <- unname(quantile(
                null.minimum.pvalues,
                alpha
            ))

            calibration.grid[[rho.index]][[omega.index]] <- list(
                rho = rho,
                omega = omega,
                stage.0.budget = stage.0.budget,
                var.stage.budget = var.stage.budget,
                skew.stage.budget = skew.stage.budget,
                gamma.0.2stages = gamma.0.2stages,
                gamma.1.2stages = gamma.1.2stages,
                gamma.0.3stages = gamma.0.3stages,
                gamma.var.stage.1.3stages = gamma.var.stage.1.3stages,
                gamma.skew.stage.2.3stages = gamma.skew.stage.2.3stages,
                gamma.skew.stage.1.3stages = gamma.skew.stage.1.3stages,
                gamma.var.stage.2.3stages = gamma.var.stage.2.3stages,
                KS.bound.1stage = KS.bound.1stage,
                KS.bound.2stages = KS.bound.2stages,
                KS.bound.3stages = KS.bound.3stages,
                var.rb.1stage = var.rb.1stage,
                var.lb.1stage = var.lb.1stage,
                skew.rb.1stage = skew.rb.1stage,
                skew.lb.1stage = skew.lb.1stage,
                rb.sample.var.q.stage.1.2stages =
                    var.bounds.stage.1.2stages["right"],
                lb.sample.var.q.stage.1.2stages =
                    var.bounds.stage.1.2stages["left"],
                rb.sample.var.q.stage.1.3stages =
                    var.bounds.stage.1.3stages["right"],
                lb.sample.var.q.stage.1.3stages =
                    var.bounds.stage.1.3stages["left"],
                rb.sample.skew.q.stage.1.2stages =
                    skew.bounds.stage.1.2stages["right"],
                lb.sample.skew.q.stage.1.2stages =
                    skew.bounds.stage.1.2stages["left"],
                rb.sample.skew.q.stage.1.3stages =
                    skew.bounds.stage.1.3stages["right"],
                lb.sample.skew.q.stage.1.3stages =
                    skew.bounds.stage.1.3stages["left"],
                rb.sample.skew.q.stage.2.3stages =
                    skew.bounds.stage.2.3stages["right"],
                lb.sample.skew.q.stage.2.3stages =
                    skew.bounds.stage.2.3stages["left"],
                rb.sample.var.q.stage.2.3stages =
                    var.bounds.stage.2.3stages["right"],
                lb.sample.var.q.stage.2.3stages =
                    var.bounds.stage.2.3stages["left"],
                minimum.pvalue.weights = minimum.pvalue.weights,
                pvalue.min.bound = pvalue.min.bound
            )
        }
    }

    print("omega rho grid, preliminary computations done")
    flush.console()

    # Independent H0 validation bank. It is shared across all grid points so
    # the type-I-error comparisons are paired.
    stats.H0.val <- add.component.pvalues(
        simulate.statistics(null.distribution.generation.function, h),
        sorted.null.statistics
    )

    for (rho.index in seq_along(rho.grid)) {
        for (omega.index in seq_along(omega.grid)) {
            type1.error.estimates[, rho.index, omega.index, z] <-
                check.error.on.distribution(
                    stats.H0.val,
                    calibration.grid[[rho.index]][[omega.index]]
                )
        }
    }

    print("type 1 error done")
    flush.console()

    # Each H1 bank is generated once and then reused across every rho and omega.
    # Processing alternatives one at a time avoids keeping all H1 banks in RAM.
    for (alternative.index in
         seq_along(alternative.distribution.generation.functions))
    {
        stats.H1 <- add.component.pvalues(
            simulate.statistics(
                alternative.distribution.generation.functions[[alternative.index]],
                h
            ),
            sorted.null.statistics
        )

        for (rho.index in seq_along(rho.grid)) {
            for (omega.index in seq_along(omega.grid)) {
                power.estimates[, alternative.index, rho.index, omega.index, z] <-
                    check.error.on.distribution(
                        stats.H1,
                        calibration.grid[[rho.index]][[omega.index]]
                    )
            }
        }

        print("alternative done")
        print(alternative.index)
        flush.console()
    }

    print(z)
    flush.console()
}


# -----------------------------------------------------------------------------
# Summaries
# -----------------------------------------------------------------------------

power.mean <- apply(power.estimates, c(1, 2, 3, 4), mean)
power.sd   <- apply(power.estimates, c(1, 2, 3, 4), sd)
power.se   <- power.sd / sqrt(t)

type1.error.mean <- apply(type1.error.estimates, c(1, 2, 3), mean)
type1.error.sd   <- apply(type1.error.estimates, c(1, 2, 3), sd)
type1.error.se   <- type1.error.sd / sqrt(t)

sensitivity.results <- list(
    rho.grid = rho.grid,
    omega.grid = omega.grid,
    power.estimates = power.estimates,
    power.mean = power.mean,
    power.sd = power.sd,
    power.se = power.se,
    type1.error.estimates = type1.error.estimates,
    type1.error.mean = type1.error.mean,
    type1.error.sd = type1.error.sd,
    type1.error.se = type1.error.se
)
\end{lstlisting}

\end{document}